\documentclass[11pt]{article}
\usepackage{latexsym}
\usepackage{amssymb}
\usepackage{amsmath,amsthm}
\usepackage[numbers,sort&compress]{natbib}
\usepackage{color}
\usepackage{amsfonts,amssymb}
\usepackage{mathrsfs}
\usepackage{dsfont}
\usepackage{subfigure}
\usepackage{graphicx}
\usepackage[colorlinks,linkcolor=blue,citecolor=blue]{hyperref}

\newtheorem{thm}{Theorem}[section]
\newtheorem{lem}{Lemma}[section]

\numberwithin{equation}{section}
\textwidth = 6.0truein
\textheight = 9.0truein
\hoffset = -1.50truecm
\voffset = -2.0truecm

\def \h#1{\widehat{#1}}
\def \t#1{\widetilde{#1}}

\newcommand{\bse}{\begin{subequations}}
\newcommand{\ese}{\end{subequations}}

\def \b#1{\bar{#1}}

\def \llbr{[\![}
\def \rrbr{]\!]}

\newcommand{\be}{\begin{equation}}
\newcommand{\ee}{\end{equation}}

\begin{document}
\title{On a Second Discretization of the ZS-AKNS Spectral Problem: Revisit}
\author{Kui Chen,~ Xiao Deng,~ Da-jun Zhang\footnote{Corresponding author: djzhang@staff.shu.edu.cn} \\
Department of Mathematics, Shanghai University, Shanghai 200444, P.R. China}
\date{\today}
\maketitle

\begin{abstract}
In this paper we revisit a discrete spectral problem which was proposed by Ragnisco and Tu in 1989, as
a second discretization of the ZS-AKNS spectral problem.
We show that the spectral problem corresponds to a bidirectional discretization of the derivative of two
wave functions $\phi_{1,x}$ and $\phi_{2,x}$.
As a connection with higher dimensional systems,
the spectral problem and a related hierarchy can be derived from Lax triads of the differential-difference KP hierarchy
via a symmetry constraint.
Isospectral and nonisospectral flows derived from the spectral problem compose a Lie algebra.
By considering its infinite dimensional subalgebras  and continuum limit of recursion operator,
three semi-discrete AKNS hierarchies are constructed.

\vskip 6pt

\noindent
{\bf Keywords:} ZS-AKNS spectral problem, gauge transformation, symmetry constraint, differential-difference KP equation

\noindent
{\bf MSC (2010):}  35Q51, 37K60
\end{abstract}

\section{Introduction} \label{Sec-1}

It is well known that the fundamental ZS-AKNS spectral problem\cite{ZakS-JETP-1972,AKNS-PRL-1973}
\begin{equation}\label{sp-AKNS}
\Phi_x = \left(
                    \begin{array}{cc}
                      \eta & q \\
                      r & -\eta \\
                    \end{array}
                  \right) \Phi,~~~
\Phi=(\phi_1, \phi_2)^T
\end{equation}
has a discretization given by Ablowitz and Ladik \cite{AL-JMP-1975,AL-JMP-1976}:
\begin{equation}\label{sp-AL}
\Phi_{n+1} = \left(
                    \begin{array}{cc}
                      \lambda & Q_n \\
                      R_n & 1/\lambda \\
                    \end{array}
                  \right) \Phi_n,~~~
\Phi_n=(\phi_{1,n}, \phi_{2,n})^T,
\end{equation}
which bears their names and is called the Ablowitz-Ladik (AL) spectral problem.
Here $f_n$ stands for a function $f(n,t)$ defined on $\mathbb{Z}\times \mathbb{C}$.
\eqref{sp-AL} leads to a semidiscrete AKNS (sdAKNS) hierarchy through suitably combining the AL flows,
and one-field reductions yield the semidiscrete
KdV, modified KdV and nonlinear Schr\"odinger hierarchies (cf.\cite{ZhaC-SAM-2010-I,ZhaC-SAM-2010-II,FQSZ-arxiv-2013}).
In 1989 Ragnisco and Tu proposed a discrete spectral problem\cite{RT-prin-1989}
\begin{equation}\label{sp-DT}
\Theta_{n+1} = \left(
                    \begin{array}{cc}
                      \lambda^2 + Q_n R_{n} & Q_n \\
                      R_{n} & 1
                    \end{array}
                  \right) \Theta_n,~~~
\Theta_n=(\theta_{1,n}, \theta_{2,n})^T,
\end{equation}
which was then studied in \cite{ZTOF-JMP-1991,MRT-IP-1994}.
\eqref{sp-DT} leads to a hierarchy of semidiscrete equations
which recover the AKNS hierarchy in continuum limit but one-field reduction was not available \cite{MRT-IP-1994}.

In this paper, we revisit the discrete spectral problem \eqref{sp-DT}.
It will be shown that \eqref{sp-DT} is gauge equivalent to the form
\begin{equation}\label{sp-1}
 \left( \begin{array}{c}
            \phi_{1,n+1} \\
            \phi_{2,n-1} \\
          \end{array}
        \right) = \left(
                    \begin{array}{cc}
                      \lambda & Q_n \\
                       -R_n & \lambda \\
                    \end{array}
                  \right)  \left(
                             \begin{array}{c}
                             \phi_{1,n} \\
                             \phi_{2,n} \\
                             \end{array}
                           \right).
\end{equation}
Compared with the AL spectral problem \eqref{sp-AL},
the above one is obtained by discretizing the first order derivatives of wave functions $(\phi_1,\phi_2)^T$ in \eqref{sp-AKNS} in bidirection, i.e.
\begin{equation}
\phi_{1,x}\sim \frac{\phi_{1,n+1}-\phi_{1,n}}{\epsilon},~~ \phi_{2,x}\sim  \frac{\phi_{2,n}-\phi_{1,n-1}}{\epsilon}.
\end{equation}
Both \eqref{sp-AL} and \eqref{sp-DT} recover the AKNS spectral problem in continuum limit
by defining
\[\Phi(n+j)=\Phi(x+j\epsilon),~~ (Q_n, R_n)=\epsilon (q,r),~~~ \lambda=e^{\epsilon \eta},\]
and taking $\epsilon \to 0$.

Besides bidirectional discretization of \eqref{sp-AKNS},
the discrete spectral problem \eqref{sp-DT} or \eqref{sp-1}
is interesting in two  more aspects.
One is that \eqref{sp-DT} is related to a symmetry constraint
of the differential-difference Kadomtsev-Petviashvili (KP) equation.
This fact demonstrates a link between (1+1)-dimensional and
(2+1)-dimensional semidiscrete integrable systems.
The other is that, as a spectral problem, \eqref{sp-DT} is  a Darboux transformation of the AKNS hierarchy.
Note that a Darboux transformation can act as a discrete spectral problem to generate
semidiscrete and fully discrete integrable systems.
Let us give more details in the following.

It is well known that in continuous case the AKNS hierarchy can be viewed as a symmetry constraint of the Lax pairs of the KP
hierarchy \cite{KSS-PLA-1991,KS-IP-1991,CL-PLA-1991,CL-JPA-1992}.
The differential-difference KP (D$^2\Delta$KP) equation\footnote{D$^2\Delta$ indicates 2 continuous and 1 discrete independent variables.}
reads \cite{DJM-JPSJ-1982}
\begin{equation}
\Delta\left(\frac{\partial u}{\partial t_2}+2\frac{\partial u}{\partial x}-2u\frac{\partial u}{\partial x}\right)
=(2+\Delta)\frac{\partial^2 u}{{\partial x}^2},
\label{DDKP}
\end{equation}
which is related to the spectral problem \cite{KT-SCF-1997}
\begin{equation}
\mathfrak{L}\varphi_n =\xi \varphi_n,~~ \mathfrak{L}=\Delta+u_{0,n}+ u_{1,n}\Delta^{-1}+ u_{2,n}\Delta^{-2}+\cdots,
\label{sp-L}
\end{equation}
where $\Delta =E-1$, $Ef_n=f_{n+1}$, $\mathfrak{L}$ is called a pseudo-difference operator and in \eqref{DDKP} $u=u_{0,n}$.
In this paper we will show that by a symmetry constraint the spectral problem \eqref{sp-L} leads to a
spectral problem which is gauge-equivalent to \eqref{sp-DT}
and the Lax triads of the D$^2\Delta$KP hierarchy yields a sdAKNS hierarchy. This link
will be explained in detail in Sec.\ref{Sec-3}.

It is also well known that a Darboux transformation $\t \Phi = D(u, \t u, \lambda)\Phi$
of a continuous spectral problem $\Phi_x=M(u,\eta)\Phi$, where
$D(u, \t u, \lambda)$ is a Darboux matrix with parameter $\lambda$ and $\t \Phi$ and $\t u$ stand for new eigenfunction and potential
corresponding to $\lambda$, can act as a discrete spectral problem (by considering $\t \Phi=\Phi_{n+1}$ and $\t u=u_{n+1}$)
\begin{equation}
\Phi_{n+1} = D(u_n, u_{n+1}, \lambda)\Phi_n
\end{equation}
to generate semidiscrete integrable systems as a compatible condition with $\Phi_x=M(u,\eta)\Phi$ \cite{LB-PNAS-1980,Levi-JPA-1981}.
Moreover, Darboux transformations with different parameters, say
\[\t \Phi = D(u, \t u, \lambda_1)\Phi,~~\h \Phi = D(u, \h u, \lambda_2)\Phi\]
can be used as a Lax pair to generate fully discrete integrable systems.
As examples one can refer to \cite{CZ-CPL-2012,CaoZ-JPA-2012,KMW-TMP-2013,Mik-INI-2013}.
In \cite{AY-JPA-1994} the spectral problem \eqref{sp-DT} was studied as a Darboux transformation
of the ZS-AKNS spectral problem \eqref{sp-AKNS} (with $\lambda^2=2(\eta-\gamma)$).
It is natural that the semidiscrete equations generated from a Darboux transformation (as a discrete spectral problem)
are related via suitable continuum limit to the original continuous spectral problem.

In this paper, as new results we mainly achieve the following:
\begin{itemize}
\item{
find that the discrete spectral problem \eqref{sp-DT}  is (gauge) equivalent to \eqref{sp-1}
which is a bidirectional discretization of the ZS-AKNS spectral problem \eqref{sp-AKNS};}
\item{build  connection between the spectral problem \eqref{sp-DT} and   pseudo-difference operator spectral problem \eqref{sp-L}
via a  symmetry constraint of the D$^2\Delta$KP equation, as well as a connection between a sdAKNS hierarchy and the Lax triads of the
D$^2\Delta$KP hierarchy;}
\item{
obtain three sdAKNS hierarchies that are different from those derived from the AL spectral problem \eqref{sp-AL}.}
\end{itemize}

The paper is organized as follows.
Sec.\ref{Sec-2} contains necessary notions and notations.
In Sec.\ref{Sec-3} we show connections between \eqref{sp-DT}, \eqref{sp-1} and \eqref{sp-L},
and in Sec.\ref{Sec-4} we derive \eqref{sp-DT} and a sdAKNS hierarchy from the Lax triads of the
D$^2\Delta$KP hierarchy.
In Sec.\ref{Sec-5} we discuss possible sdAKNS hierarchies related to \eqref{sp-DT}.
Sec.\ref{Sec-6} is for conclusions and discussions.
There is one Appendix which, as a comparison, gives the  sdAKNS hierarchies derived from the AL spectral problem.

\section{Basic notions}\label{Sec-2}

Let us shortly describe some notions and notations that we will use in the paper.
(We mainly follow  \cite{FF-PD-1981,Fuc-MSE-1991}).

For  functions $Q_n$ and $R_n$ defined on $\mathbb{Z}$ and vanishing  rapidly as $n \to \pm \infty$,
let $U_n \doteq (Q_n,R_n)^T$.
Consider a differential-difference evolution equation
\begin{align}\label{def:nlee}
U_{n,t}=K(U_n),\quad U_n\in \mathcal{M},
\end{align}
where by $\mathcal{M}$ we denote the infinite dimensional linear manifold of  functions $U_n$.
The solution $U_n=U(n,t)$ is usually asked to depend in a $\mathbb{C}^{\infty}$-way on the time parameter $t$.
Let $S$ be  the fiber of the tangent bundle $T\mathcal{M}$ at any point $U_n \in M$.
In principle there is an identification between the linear spaces $\mathcal{M}$ and $S$,
but it is convenient to regard them as different objects for a better geometrical
understanding (i.e. $\mathcal{M}$ is the manifold under examination, $S$ is the tangent space at any point $U_n \in \mathcal{M}$).
Let $S^*$  be the dual space of $S$ w.r.t. the bilinear form $\langle\cdot,\cdot\rangle: S^*\times S\rightarrow\mathbb{R}$ defined as
\begin{align}\label{bil}
\langle f_n, g_n\rangle= \sum_{n=-\infty}^{+\infty} f_n g_n,~~~ f_n\in S^*, g_n\in S.
\end{align}

The G\^ateaux derivative of a function (or an operator or a functional) $F(U_n)$ on $\mathcal{M}$ in the direction $g_n\in S$ is defined as
\begin{align}\label{def:G-deriv}
F'[g_n]=\frac{\partial}{\partial\varepsilon}F(U_n+\varepsilon g_n)\Big|_{\varepsilon=0},\quad U_n\in \mathcal{M},g_n\in S.
\end{align}
The above definition is valid as well for the case $F=F(U_n,t)$,
where $F$ depends explicitly on the time parameter $t$
and we treat $U_n$ and $t$  as independent variables (cf.  \cite{Fuc-MSE-1991}).
For the sake of a more generic sense, in the following definitions are given for the time-dependent cases.
They are  valid as well when we remove the independent time variable $t$.

For two vector fields $F(U_n,t), G(U_n,t) : \mathcal{M} \times \mathbb{R} \to S$, their standard commutator is defined as
\begin{align}\label{def:comm}
\llbr F,G\rrbr=F'[G]-G'[F].
\end{align}
Vector field $G(U_n,t) : \mathcal{M} \times \mathbb{R} \to S$ is called a symmetry of equation \eqref{def:nlee} if
\begin{align}\label{def:sym}
\partial_t G(U_n,t)+\llbr G(U_n,t),K(U_n)\rrbr=0
\end{align}
holds everywhere in $\mathcal{M}\times \mathbb{R}$.

A linear operator $L(U_n,t): S\rightarrow S$ is called a strong symmetry operator of equation \eqref{def:nlee} if
\begin{align}\label{def:SS-op}
\partial_t L +L'[K]=[K', L]
\end{align}
holds everywhere on $\mathcal{M}$, where $[A,B]=AB-BA$.
A linear operator $L(U_n,t): S\rightarrow S$ is called to be hereditary (or a hereditary operator) if
\begin{align}\label{def:Here-op}
L'[LF]G - L'[LG]F = L(L'[F]G-L'[G]F),~~ \forall F,G\in S.
\end{align}
If $L$ is a hereditary operator, so is $L^{-1}$.
If $L$ is a  hereditary operator and is a strong symmetry of equation \eqref{def:nlee},
then $L$ is also a strong symmetry of equation $U_{n,t}=L\,K(U_n)$.

\section{Gauge equivalent forms of \eqref{sp-DT}}\label{Sec-3}

In this section we will  list out some spectral problems which are  gauge equivalent to the spectral problem \eqref{sp-DT}.

In addition to \eqref{sp-DT} and \eqref{sp-1}, we list the following spectral problems
\begin{equation}\label{sp-1*}
\Phi_{n+1} = \left(
                    \begin{array}{cc}
                      \lambda & Q_n \\
                      R_{n+1} & (1+Q_n R_{n+1})/\lambda \\
                    \end{array}
                  \right) \Phi_n,~~~
\Phi_n=(\phi_{1,n}, \phi_{2,n})^T,
\end{equation}
\begin{equation}\label{sp-2*}
\Psi_{n+1} = \left(
                    \begin{array}{cc}
                      \lambda^2 & Q_n \\
                      \lambda^2 R_{n+1} & 1+Q_n R_{n+1} \\
                    \end{array}
                  \right) \Psi_n,~~~
\Psi_n=(\psi_{1,n}, \psi_{2,n})^T,
\end{equation}
\begin{equation}\label{sp-2**}
\Sigma_{n+1} = \left(
                    \begin{array}{cc}
                      1 & Q_n \\
                      R_{n+1}/\lambda^2 & (1+Q_n R_{n+1})/\lambda^2 \\
                    \end{array}
                  \right) \Sigma_n,~~~
\Sigma_n=(\sigma_{1,n}, \sigma_{2,n})^T,
\end{equation}
\begin{equation}\label{sp-2***}
\Pi_{n+1} = \left(
                    \begin{array}{cc}
                      \lambda^2 & \lambda Q_n \\
                      \lambda R_{n+1} & 1+Q_n R_{n+1} \\
                    \end{array}
                  \right) \Pi_n,~~~
\Pi_n=(\pi_{1,n}, \pi_{2,n})^T,
\end{equation}
\begin{equation}\label{sp-2}
 \left( \begin{array}{c}
            \psi_{1,n+1} \\
            \psi_{2,n-1} \\
          \end{array}
        \right) = \left(
                    \begin{array}{cc}
                      \lambda^2 & Q_n \\
                       -R_n & 1 \\
                    \end{array}
                  \right)  \left(
                             \begin{array}{c}
                             \psi_{1,n} \\
                             \psi_{2,n} \\
                             \end{array}
                           \right),
\end{equation}
\begin{equation}
\mathcal{L} \varphi_n=\xi \varphi_n,~~~ \mathcal{L}=\Delta- Q_nR_n- Q_n\Delta^{-1} R_n.
\label{sp-SC}
\end{equation}
They are related to each other as in the following diagram.

\[\begin{array}{cccccl}
\eqref{sp-1} & \underrightarrow{~~~ \mathrm{GT}_1~~~ } & \eqref{sp-2} & \underrightarrow{~~ \varphi_n= \psi_{1,n},  \xi=\lambda^2-1~} & \eqref{sp-SC}& \\
\Big\Updownarrow &                      & \Big\Updownarrow & & \Big\uparrow & \hskip -10pt  \varphi_n= \theta_{1,n},~ \xi=\lambda^2-1\\
\eqref{sp-1*} & \underrightarrow{~~~ \mathrm{GT}_1~~~ } & \eqref{sp-2*} & \underrightarrow{~~~~~~~~~~~~~ \mathrm{GT}_2 ~~~~~~~~~~~~} & \eqref{sp-DT}& \\
\Big\downarrow & \hskip -50pt \mathrm{GT}_3          & \Big\downarrow & \hskip -120pt \mathrm{GT}_4 &&\\
\eqref{sp-2**}&& \eqref{sp-2***} &&&\\
\end{array}
\]
\centerline{Fig.1 Relations of eight spectral problems}

\vskip 12pt
\noindent
Here are the gauge transformations
\begin{subequations}
\begin{align}
&\mathrm{GT}_1:~~ \Phi_n=T_1 \Psi_n,~~  T_1 = \lambda^{-n} \left(\begin{array}{cc}
                      1 & 0 \\
                      0 & 1/\lambda
                    \end{array}\right),\label{GT-T1}\\
&\mathrm{GT}_2:~~ \Psi_n=T_2 \Theta_n,~~  T_2 =  \left(\begin{array}{cc}
                      1 & 0 \\
                      R_n & 1
                    \end{array}\right),  \label{GT-T2}\\
&\mathrm{GT}_3:~~ \Phi_n=T_3 \Sigma_n,~~  T_3 =  T^{-1}_1,  \label{GT-T3}\\
&\mathrm{GT}_4:~~ \Psi_n=T_4 \Pi_n,~~  T_4 =  \left(\begin{array}{cc}
                      1/\lambda & 0 \\
                      0 & 1
                    \end{array}\right). \label{GT-T4}
\end{align}
\end{subequations}

\begin{thm}\label{Thm-0}
The spectral problem \eqref{sp-DT} is (gauge) equivalent to \eqref{sp-1} which is a bidirectional discretization
of the ZS-AKNS spectral problem \eqref{sp-AKNS}.
\end{thm}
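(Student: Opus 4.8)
The plan is to follow the left column of the diagram in Fig.~1, i.e. the chain
\[
\eqref{sp-1}\ \Longleftrightarrow\ \eqref{sp-1*}\ \xrightarrow{\ \mathrm{GT}_1\ }\ \eqref{sp-2*}\ \xrightarrow{\ \mathrm{GT}_2\ }\ \eqref{sp-DT},
\]
and then to perform the $\epsilon\to 0$ continuum limit of \eqref{sp-1}. Since $\mathrm{GT}_1$ and $\mathrm{GT}_2$ are given by invertible matrices $T_1,T_2$, each of those two arrows is automatically a two-sided equivalence, so only the first link, between the bidirectional form \eqref{sp-1} and the ordinary one-step form \eqref{sp-1*}, needs a separate argument.

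First I would prove \eqref{sp-1}$\Longleftrightarrow$\eqref{sp-1*}. From \eqref{sp-1} the first row already reads $\phi_{1,n+1}=\lambda\phi_{1,n}+Q_n\phi_{2,n}$; writing the second row at index $n+1$ gives $\lambda\phi_{2,n+1}=\phi_{2,n}+R_{n+1}\phi_{1,n+1}$, and eliminating $\phi_{1,n+1}$ with the first row yields $\phi_{2,n+1}=R_{n+1}\phi_{1,n}+\lambda^{-1}(1+Q_nR_{n+1})\phi_{2,n}$, which is exactly \eqref{sp-1*}. Conversely, starting from \eqref{sp-1*} one uses its first relation at index $n-1$ to express $\phi_{1,n-1}$ in terms of $\phi_{1,n},\phi_{2,n-1}$, substitutes into its second relation at index $n-1$, and after the $Q_{n-1}R_n$ contributions cancel one is left with $\phi_{2,n-1}=\lambda\phi_{2,n}-R_n\phi_{1,n}$, the second row of \eqref{sp-1}. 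Hence the two systems have the same solution set.

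Next I would carry out $\mathrm{GT}_1$ and $\mathrm{GT}_2$ as plain conjugations: for $\Phi_{n+1}=M_n\Phi_n$ and $\Phi_n=T_n\Psi_n$ the new matrix is $\widetilde M_n=T_{n+1}^{-1}M_nT_n$. Applying $\mathrm{GT}_1$ with $T_1$ from \eqref{GT-T1} to \eqref{sp-1*}, the factors $\lambda^{-(n+1)}$ and $\lambda^{-n}$ from $T_1$ collapse to a single overall $\lambda$, and the remaining diagonal conjugation turns the matrix of \eqref{sp-1*} into that of \eqref{sp-2*}. Applying $\mathrm{GT}_2$ with $T_2$ from \eqref{GT-T2} to \eqref{sp-2*}, I would expand $T_2(n+1)^{-1}M_nT_2(n)$ with $M_n$ now the matrix of \eqref{sp-2*}: the first row is $(\lambda^2+Q_nR_n,\ Q_n)$ at once, while in the second row the terms in $\lambda^2R_{n+1}$ and in $Q_nR_nR_{n+1}$ cancel, leaving $(R_n,\ 1)$, which is precisely the matrix of \eqref{sp-DT}. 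Composing the three steps gives the claimed (gauge) equivalence of \eqref{sp-DT} and \eqref{sp-1}.

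Finally, for the continuum limit I would set $\Phi(n+j)=\Phi(x+j\epsilon)$, $(Q_n,R_n)=\epsilon(q,r)$, $\lambda=e^{\epsilon\eta}$ in \eqref{sp-1} and Taylor expand. The first row gives $\phi_1+\epsilon\phi_{1,x}+O(\epsilon^2)=(1+\epsilon\eta+O(\epsilon^2))\phi_1+\epsilon q\phi_2$, whose $O(\epsilon)$ part is $\phi_{1,x}=\eta\phi_1+q\phi_2$; the second row gives $\phi_2-\epsilon\phi_{2,x}+O(\epsilon^2)=-\epsilon r\phi_1+(1+\epsilon\eta+O(\epsilon^2))\phi_2$, whose $O(\epsilon)$ part is $\phi_{2,x}=r\phi_1-\eta\phi_2$; together these are \eqref{sp-AKNS}. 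The opposite shifts $\phi_{1,n+1}$ and $\phi_{2,n-1}$ are exactly what makes this the bidirectional discretization, realizing $\phi_{1,x}$ by the forward difference $(\phi_{1,n+1}-\phi_{1,n})/\epsilon$ and $\phi_{2,x}$ by the backward difference $(\phi_{2,n}-\phi_{2,n-1})/\epsilon$. I expect the only point needing genuine care to be the first link: checking that eliminating $\phi_{2,n\pm1}$ produces an equivalence of solution sets and not merely an implication, together with the bookkeeping of the shifted indices $n\pm1$ in $Q$ and $R$ so that the cancellations in $\mathrm{GT}_2$ and in the backward elimination really occur; the conjugations themselves, including the $\lambda^{-n}$ in $T_1$, are routine.
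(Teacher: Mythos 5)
Your proposal is correct and follows exactly the route the paper itself indicates: the paper offers no written-out proof beyond Fig.~1 and the list of transformations $\mathrm{GT}_1$--$\mathrm{GT}_2$, and your chain \eqref{sp-1}$\Leftrightarrow$\eqref{sp-1*}$\to$\eqref{sp-2*}$\to$\eqref{sp-DT} together with the continuum limit simply supplies the verifications the authors leave implicit (all of which check out, including the cancellation $-R_{n+1}(\lambda^2+Q_nR_n)+\lambda^2R_{n+1}+(1+Q_nR_{n+1})R_n=R_n$ and the collapse of $\lambda^{n+1}\cdot\lambda^{-n}$ to $\lambda$ in $\mathrm{GT}_1$). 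No gaps.
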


Note that after the early work \cite{ZTOF-JMP-1991,MRT-IP-1994}, the spectral problem \eqref{sp-DT}
has been  reinvestigated
in different forms
(for example, \eqref{sp-1*} in \cite{FanY-IJTP-2009,TZZZ-MPLB-2016}, \eqref{sp-2*} in \cite{DonZZ-CNSNS-2016},
\eqref{sp-2**} in  \cite{YXD-CSF-2005}, \eqref{sp-2***} in \cite{DXYD-MPLB-2007,Qin-JMP-2008,WRH-DDNS-2012}), \eqref{sp-SC} in \cite{LLC-CTP-2014}, etc.).
However, since in these gauge transformations GT$_i$ only  eigenfunctions are involved (without any changes of potentials),
the evolution equations derived from all the spectral problems listed in Fig.1 are the same (up to some combinations of flows).
In fact, for two evolution equations which are derived respectively as compatibilities of linear problems
\begin{equation}\label{sp-t1}
 \Phi_{n+1}=M_n(u_n,\lambda)\Phi_n,~~ \Phi_{n,t} = N_n \Phi_n,
\end{equation}
and
\begin{equation}\label{sp-t2}
 \Psi_{n+1}=U_n(u_n,\lambda)\Psi_n,~~ \Psi_{n,t} = V_n \Psi_n,
\end{equation}
if they are gauge equivalent via transformation $\Phi_n=T_n\Psi_n$,
then there are relations
\[
 M_n = T_{n+1} {U}_n T_n^{-1},~~
 N_n = T_{n,t}T^{-1}_n + T_n V_n T^{-1}_n,
\]
and consequently their compatibilities are related by
\[
 M_{n,t}- N_{n+1}M_n+ M_n N_n
 =T_{n+1}(U_{n,t_m} - V_{n+1}U_n + U_n V_n)T_{n}^{-1},
\]
which means the two equations derived from \eqref{sp-t1} and \eqref{sp-t2}  as compatibilities are same.

\section{A symmetry constraint of the D$^2\Delta$KP equation}\label{Sec-4}

In this section we investigate in detail a symmetry constraint of the D$^2\Delta$KP equation \eqref{DDKP},
by which we reduce \eqref{sp-L} to \eqref{sp-SC}
and generate a sdAKNS hierarchy as well.

\subsection{Spectral problem \eqref{sp-SC} as a symmetry constraint of  \eqref{sp-L}}\label{sec-4-1}

Spectral problem \eqref{sp-SC} is connected with \eqref{sp-L} through a symmetry constraint of the D$^2\Delta$KP equation \eqref{DDKP}.
To explain this,  let us start from Lax triad of the D$^2\Delta$KP equation \cite{FHTZ-Non-2013}.
Consider the pseudo-difference operator $\mathfrak{L}$ defined in \eqref{sp-L}, i.e.,
\begin{equation}
\mathfrak{L}=\Delta+u_{0,n}+ u_{1,n}\Delta^{-1}+ u_{2,n}\Delta^{-2}+\cdots,
\label{L}
\end{equation}
where $u_{i,n}=u_i(n,x,\mathbf{t})$ and $\mathbf{t}=(t_1,t_2,\cdots)$. The difference operator $\Delta$  obeys
the discrete Leibniz rule
\begin{equation}
\Delta^s g(n)=\sum^{\infty}_{i=0}\mathrm{C}_s^i (\Delta^i g(n+s-i))\Delta^{s-i},~~ s\in \mathbb{Z},
\label{Leib-rule}
\end{equation}
where
\begin{equation}
\mathrm{C}_s^i=\frac{s(s-1)(s-2)\cdots(s-i+1)}{i!},~~\mathrm{C}_0^0=1.
\label{Csi}
\end{equation}
The one-field D$^2\Delta$KP hierarchy can be derived from the following Lax triad \cite{FHTZ-Non-2013},
\begin{subequations}\label{Lax-t}
\begin{align}
& \mathfrak{L} \varphi_n=\xi \varphi_n,\label{Lax-ta}\\
& \varphi_{n,x}=A_1 \varphi_n,~~ A_1=\Delta +u_{0,n},\label{Lax-tb}\\
& \varphi_{n,t_j}=A_j \varphi_n,~~ (j=1,2,\cdots),\label{Lax-tc}
\end{align}
\end{subequations}
where $A_j=(\mathfrak{L}^j)_{+}$ denotes the difference part of $\mathfrak{L}^j$,  the first two of which are
\begin{subequations}\label{A1A2}
\begin{align}
& A_1=\Delta +u_{0,n},\label{A1}\\
& A_2=\Delta^2+((\Delta u_{0,n})+2u_{0,n})\Delta+(\Delta u_{0,n})+u^2_{0,n}+(\Delta u_{1,n})+2u_{1,n}. \label{A2}
\end{align}
\end{subequations}
Compatibility of \eqref{Lax-t} reads
\begin{subequations}\label{Lax-com}
\begin{align}
& \mathfrak{L}_x = [A_1, \mathfrak{L}],\label{Lax-ca}\\
& \mathfrak{L}_{t_j} = [A_j, \mathfrak{L}],~~(j=1,2,\cdots),\label{Lax-cb}\\
& A_{1,t_j}-A_{j,x}+[A_1,A_j]=0,~~ (j=1,2,\cdots),\label{Lax-cc}
\end{align}
\end{subequations}
where $[A,B]=AB-BA$.
Among \eqref{Lax-com}, the first equation \eqref{Lax-ca} provides expressions of $u_{j,n}$ in terms of $u_{0,n}$, which are
\begin{subequations}
\begin{align}
& \Delta u_{1,n} = u_{0,n,x},   \\
& \Delta u_{k+1,n} = u_{k,n,x} - \Delta u_{k,n} -u_{0,n}u_{k,n} + \sum^{k-1}_{j=0}(-1)^j \mathrm{C}_{k-1}^{j} u_{k-j,n}
\Delta^{j} u_{0,n-k},\quad k\geq 1.
\end{align}\label{uk-u0}
\end{subequations}
Besides, \eqref{Lax-cc}, written as
\begin{equation}
u_{0,n,t_j}= \mathcal{K}_{j}=A_{j,x}-[A_1,A_j],~~ (j=1,2,\cdots),
\label{DDKP-hie}
\end{equation}
provides zero curvature representations of a hierarchy of the one-field
D$^2\Delta$KP equation if substituting $u_{j,n}$ with $u_{0,n}$ by using \eqref{uk-u0}.
Particularly, when $j=2$ one gets the D$^2\Delta$KP equation \eqref{DDKP}.

As in continuous case, the D$^2\Delta$KP equation \eqref{DDKP} has a symmetry
composed of eigenfunction $\varphi_n$ and its adjoint function $\b\varphi_n$.
\begin{lem}\label{Lem-1}\cite{KT-SCF-1997}
$(\varphi_n \b\varphi_n)_x$ is a symmetry of the D$^2\Delta$KP equation \eqref{DDKP}
provided
\begin{equation}
\varphi_{n,x}=A_1 \varphi_n,~~ \b\varphi_{n,x}=-A_1^* \b\varphi_n,
\label{phi-x}
\end{equation}
and
\begin{equation}
\varphi_{n,t_2}=A_2 \varphi_n,~~ \b \varphi_{n,t_2}=-A^*_2 \b \varphi_n.
\end{equation}
\end{lem}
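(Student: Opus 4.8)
The plan is to obtain Lemma~\ref{Lem-1} as the differential-difference counterpart of the classical squared-eigenfunction (``ghost'') symmetry of the KP hierarchy, realising $(\varphi_n\b\varphi_n)_x$ as a Lax-type deformation of the pseudo-difference operator $\mathfrak{L}$. First I would record the elementary reduction that already exhibits $(\varphi_n\b\varphi_n)_x$ as a legitimate perturbation of the potential $u_{0,n}$: from \eqref{phi-x}, with $A_1=\Delta+u_{0,n}$ and $A_1^{*}=(E^{-1}-1)+u_{0,n}$, the $u_{0,n}$-terms cancel, leaving
\[
(\varphi_n\b\varphi_n)_x=(\Delta\varphi_n)\b\varphi_n-\varphi_n(\Delta^{*}\b\varphi_n)=\Delta(\varphi_n\b\varphi_{n-1}),
\]
a total difference.

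Next I would introduce the order-$(-1)$ pseudo-difference operator $\mathfrak{M}_n=\varphi_n\Delta^{-1}\b\varphi_n$ (the composition acting on $h_n$ by $\varphi_n\,\Delta^{-1}(\b\varphi_n h_n)$) and consider the deformation $\mathfrak{L}_\tau=[\mathfrak{M}_n,\mathfrak{L}]$. Expanding it with the discrete Leibniz rule \eqref{Leib-rule} shows that $[\mathfrak{M}_n,\mathfrak{L}]$ has order $\le 0$, hence is a tangent vector to the manifold of operators of the form \eqref{L}, and that its $\Delta^{0}$-coefficient comes entirely from $[\mathfrak{M}_n,\Delta]$ and equals $-\Delta(\varphi_n\b\varphi_{n-1})=-(\varphi_n\b\varphi_n)_x$ by the identity above. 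Thus the $\tau$-flow acts on the potential by $u_{0,n,\tau}=-(\varphi_n\b\varphi_n)_x$, and the lemma follows once this flow is shown to commute with the D$^2\Delta$KP flow $\mathfrak{L}_{t_2}=[A_2,\mathfrak{L}]$ (and, so as to stay inside the constrained family \eqref{uk-u0}, also with $\mathfrak{L}_x=[A_1,\mathfrak{L}]$), because the commutativity of these flows is exactly the statement that $(\varphi_n\b\varphi_n)_x$ solves the linearised D$^2\Delta$KP equation, i.e.\ is a symmetry in the sense of \eqref{def:sym}.

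For the commutativity, a Jacobi-identity manipulation gives $[\partial_\tau,\partial_{t_j}]\mathfrak{L}=[\,\partial_\tau A_j+[A_j,\mathfrak{M}_n]-\partial_{t_j}\mathfrak{M}_n,\ \mathfrak{L}\,]$; using $A_j=(\mathfrak{L}^j)_+$, $\partial_\tau A_j=([\mathfrak{M}_n,\mathfrak{L}^j])_+$, and the fact that $(\mathfrak{L}^j)_-$ contributes nothing to the positive part of a commutator with the order-$(-1)$ operator $\mathfrak{M}_n$, the bracketed quantity vanishes provided $\partial_{t_j}\mathfrak{M}_n=([A_j,\mathfrak{M}_n])_-$. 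In view of $\varphi_{n,t_j}=A_j\varphi_n$ and $\b\varphi_{n,t_j}=-A_j^{*}\b\varphi_n$ (for $j=1,2$, the hypotheses of the lemma), the last relation is the case $B=A_j,\ f=\varphi_n,\ g=\b\varphi_n$ of the ghost identity
\[
M_{Bf}\,\Delta^{-1}M_g-M_f\,\Delta^{-1}M_{B^{*}g}=\bigl([\,B,\;M_f\Delta^{-1}M_g\,]\bigr)_-,\qquad B=B_+ ,
\]
so the $\tau$-flow commutes with $\mathfrak{L}_{t_2}=[A_2,\mathfrak{L}]$ and with $\mathfrak{L}_x=[A_1,\mathfrak{L}]$, which proves the lemma. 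I expect the only genuinely technical step to be this ghost identity: its proof is a careful accounting of the site shifts produced by \eqref{Leib-rule} and by $\Delta^{*}=E^{-1}-1$, the discrete analogue of the residue identity underlying KP squared-eigenfunction symmetries. A more pedestrian alternative, avoiding the operator language altogether, is to put $G=(\varphi_n\b\varphi_n)_x$ and check $\partial_{t_2}G+\llbr G,\mathcal{K}_2\rrbr=0$ directly by substituting \eqref{A1}, \eqref{A2}, \eqref{uk-u0} and the linear equations \eqref{phi-x} together with the $t_2$-equations of the lemma; this needs no new idea but a long calculation. Finally, note that the eigenvalue equation \eqref{Lax-ta} is not used — only the linear evolutions of $\varphi_n$ and $\b\varphi_n$ matter — although it is precisely what makes the triad \eqref{Lax-t} (and, later, the symmetry constraint of Section~\ref{sec-4-1}) consistent.
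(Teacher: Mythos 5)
Your argument is correct, and it is in fact the argument the paper itself endorses: Lemma~\ref{Lem-1} is stated with a citation to \cite{KT-SCF-1997} and given no proof in the text, but the closing remark of Section~\ref{Sec-4-2} sketches exactly your route --- the additional (ghost) symmetry $\mathfrak{L}_z=-[\varphi_n\Delta^{-1}\b\varphi_n,\mathfrak{L}]$, whose $\Delta^0$ coefficient is $\varphi_{n+1}\b\varphi_n-\varphi_n\b\varphi_{n-1}=(\varphi_n\b\varphi_n)_x$ under \eqref{phi-x}, together with the commutativity $[\partial_{t_j},\partial_z]\mathfrak{L}=0$, which the paper attributes to \cite{LCTLH-MM-2016}. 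What you add is a self-contained derivation of that commutativity: your ``ghost identity'' $\partial_{t_j}\mathfrak{M}_n=([A_j,\mathfrak{M}_n])_-$ is precisely the combination of \eqref{Am-1} and \eqref{Am-2} recorded after Lemma~\ref{Lem-SC-2}, namely $[A_m,Q_n\Delta^{-1}R_n]_-=(A_mQ_n)\Delta^{-1}R_n-Q_n\Delta^{-1}(A_m^*R_n)$ with $(Q_n,R_n)$ replaced by $(\varphi_n,\b\varphi_n)$, so you may simply invoke that lemma rather than reprove the shift bookkeeping (its hypothesis that the eigenfunctions decay as $|n|\to\infty$ should then be stated explicitly, since it is what makes the telescoping in \eqref{eq1} legitimate). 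Two minor points: your normalisation gives $u_{0,n,\tau}=-(\varphi_n\b\varphi_n)_x$ rather than $+(\varphi_n\b\varphi_n)_x$, which is harmless since symmetries form a linear space; and you are right to insist on commutation with the $x$-flow as well, since the reduction of the $\mathfrak{L}$-flow to an evolution of $u_{0,n}$ alone uses the constraints \eqref{uk-u0}, which only the $x$-flow enforces.
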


Here  $A_i^*$ stands for the formal adjoint operator of $A_i$ w.r.t. the bilinear form \eqref{bil}. Explicit forms of  them are
\begin{align*}
& A_1^*=-\Delta E^{-1} +u_{0,n}, \\
& A_2^*=\Delta^2E^{-2} -\Delta E^{-1}((\Delta u_{0,n})+2u_{0,n})+(\Delta u_{0,n})+u^2_{0,n}+ u_{0,n,x}+2(\Delta^{-1} u_{0,n,x}),
\end{align*}
where we have replaced $u_{1,n}$ with $\Delta^{-1} u_{0,n,x}$.

Note that $u_{0,n,x}$ is also a symmetry of the  D$^2\Delta$KP equation \eqref{DDKP} (cf.\cite{FHTZ-Non-2013}).
Consider a symmetry $\sigma=u_{0,n,x}+ (\varphi_n \b\varphi_n)_x$. Taking $\sigma=0$ leads to a group invariant solution to \eqref{DDKP}.
On the other hand, $\sigma=0$ provides a symmetry constraint $u_{0,n}=-\varphi_n \b\varphi_n$.
For convenience, we write $\varphi_n =Q_n,~ \b\varphi_n=R_n$, and then we have
\begin{equation}
u_{0,n}=-Q_nR_n
\label{u0-SC}
\end{equation}
and \eqref{phi-x} reads
\begin{equation}
Q_{n,x}=Q_{n+1}-Q_n-Q_n^2R_n,~~
R_{n,x}=R_n-R_{n-1}+Q_nR_n^2.
\label{QRx-SC}
\end{equation}
If replacing $R_n$ with $R_{n+1}$, \eqref{QRx-SC} provides a B\"acklund transformation for the nonlinear Schr\"odinger equations (cf.\cite{AY-JPA-1994}),
and through suitable continuum limit it yields the nonlinear Schr\"odinger equations (cf.\cite{CaoZX-prn-2016}).
Next we investigate the change of $\mathfrak{L}$ under constraint \eqref{u0-SC} and \eqref{QRx-SC}.
\begin{lem}\label{Lem-2}
If $u_{0,n}$ is given by \eqref{u0-SC} where $Q_n$ and $R_n$ obey \eqref{QRx-SC},
then $u_{k,n}$ defined in \eqref{uk-u0} can be expressed as
\begin{equation}\label{uk-QR}
u_{k+1,n}=(-1)^{k+1} Q_n \Delta^{k}R_{n-k-1},~~ k=0,1,2,\cdots.
\end{equation}
\end{lem}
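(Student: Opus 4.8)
The plan is to prove \eqref{uk-QR} by induction on $k$, substituting the Ansatz into the recursion \eqref{uk-u0}. For the base case $k=0$, \eqref{uk-u0} reads $\Delta u_{1,n}=u_{0,n,x}$; inserting $u_{0,n}=-Q_nR_n$ from \eqref{u0-SC} and eliminating $Q_{n,x}$, $R_{n,x}$ by \eqref{QRx-SC}, the $Q_nR_n$ and $Q_n^2R_n^2$ terms cancel and one is left with $u_{0,n,x}=-(Q_{n+1}R_n-Q_nR_{n-1})=\Delta(-Q_nR_{n-1})$. Since all fields decay as $n\to\pm\infty$, the formal operator $\Delta^{-1}$ in \eqref{uk-u0} is applied with vanishing constant, so $u_{1,n}=-Q_nR_{n-1}$, which is \eqref{uk-QR} at $k=0$.

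For the inductive step I would assume \eqref{uk-QR} for $u_1,\dots,u_k$, i.e.\ $u_{m,n}=(-1)^mQ_n(\Delta^{m-1}R)_{n-m}$ for $1\le m\le k$, and evaluate the right-hand side of the $k$-th instance of \eqref{uk-u0}. In the term $u_{k,n,x}$ one uses that $\partial_x$ commutes with $\Delta$, expands by the product rule, and replaces $Q_{n,x}$ and $R_{n-k,x}$ through \eqref{QRx-SC}; then $-u_{0,n}u_{k,n}$ exactly cancels the cubic contribution produced by $Q_{n,x}$, while the linear pieces of \eqref{QRx-SC}, after cancelling against $-\Delta u_{k,n}$, reassemble into $\Delta\bigl((-1)^{k+1}Q_n(\Delta^kR)_{n-k-1}\bigr)$. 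What remains to be disposed of is the $QR^2$ contribution $(-1)^kQ_n\bigl(\Delta^{k-1}(Q_\cdot R_\cdot^2)\bigr)_{n-k}$ together with the finite sum $\sum_{j=0}^{k-1}(-1)^j\mathrm{C}_{k-1}^{\,j}\,u_{k-j,n}\,(\Delta^ju_{0,n-k})$; substituting the inductive hypothesis for $u_{k-j,n}$ and $u_{0,n-k}=-Q_{n-k}R_{n-k}$, these two cancel exactly by the discrete Leibniz identity $\bigl(\Delta^{k-1}(ab)\bigr)_m=\sum_{j=0}^{k-1}\mathrm{C}_{k-1}^{\,j}(\Delta^ja)_m(\Delta^{k-1-j}b)_{m+j}$ with $a=Q_\cdot R_\cdot$, $b=R_\cdot$, $m=n-k$. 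Thus $\Delta u_{k+1,n}=\Delta\bigl((-1)^{k+1}Q_n(\Delta^kR)_{n-k-1}\bigr)$, and the decay condition again fixes the integration constant, completing the induction.

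An alternative and more conceptual route is to observe that by the Leibniz rule \eqref{Leib-rule}--\eqref{Csi} with $s=-1$ (so that $\mathrm{C}_{-1}^{\,i}=(-1)^i$), the operator $\mathcal{L}=\Delta-Q_nR_n-Q_n\Delta^{-1}R_n$ of \eqref{sp-SC} expands as $\mathcal{L}=\Delta-Q_nR_n+\sum_{k\ge 0}(-1)^{k+1}Q_n(\Delta^kR_{n-k-1})\Delta^{-(k+1)}$, whose coefficient of $\Delta^{-(k+1)}$ is precisely the right-hand side of \eqref{uk-QR}. Since \eqref{uk-u0} is exactly the coefficient form of the Lax equation \eqref{Lax-ca}, namely $\mathfrak{L}_x=[A_1,\mathfrak{L}]$ with $A_1=\Delta+u_{0,n}=\Delta-Q_nR_n$, and since this recursion determines $u_{1,n},u_{2,n},\dots$ uniquely from $u_{0,n}$ under the decay assumption, it then suffices to check directly that $\mathcal{L}_x=[\Delta-Q_nR_n,\mathcal{L}]$, using only \eqref{QRx-SC} and the expansions of the commutators $[\Delta,\,Q_n\Delta^{-1}R_n]$ and $[Q_nR_n,\,Q_n\Delta^{-1}R_n]$ via \eqref{Leib-rule}. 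Either way, the one genuinely delicate point is the combinatorial bookkeeping --- matching the coefficients $\mathrm{C}_{k-1}^{\,j}$ against the index shifts generated by $\Delta$ in the Leibniz rule, and arranging the substitution of \eqref{QRx-SC} so that the nonlinear $QR^2$ terms cancel the $j$-sum term by term; everything else is routine algebra.
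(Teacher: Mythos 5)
Your first (inductive) argument is essentially the paper's own proof: strong induction on $k$, substitution of \eqref{QRx-SC} so that the cubic term cancels against $-u_{0,n}u_{k,n}$, reassembly of the linear pieces into a total difference, and cancellation of the residual $QR^2$ term against the $j$-sum via the discrete Leibniz rule, with the decay condition fixing $\Delta^{-1}$. The sketched alternative via $\mathcal{L}_x=[A_1,\mathcal{L}]$ is a genuinely different (and tidier) packaging, but the route you actually carry out matches the paper's, and it is correct.
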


\begin{proof}
From \eqref{uk-u0}, \eqref{u0-SC} and \eqref{QRx-SC} it is not difficult to find
\begin{equation*}
u_{1,n}=-Q_n R_{n-1}, ~~ u_{2,n}= Q_n \Delta R_{n-2},~~ u_{3,n}= -Q_n\Delta^{2} R_{n-3}.
\end{equation*}
Now we suppose \eqref{uk-QR} is valid up to $u_{k+1,n}$.
Then for $u_{k+2,n}$, from \eqref{uk-u0} we have
\begin{equation}
 \Delta u_{k+2,n} = u_{k+1,n,x} - \Delta u_{k+1,n} -u_{0,n}u_{k+1,n} + \sum^{k}_{j=0}(-1)^j \mathrm{C}_{k}^{j} u_{k+1-j,n}
\Delta^{j} u_{0,n-k-1}.
\label{uk+2}
\end{equation}
For the first three terms on the right hand side, substituting \eqref{uk-QR} into them and making use of \eqref{u0-SC} and \eqref{QRx-SC}, we find
\begin{align*}
&u_{k+1,n,x} - \Delta u_{k+1,n} -u_{0,n}u_{k+1,n} \\
=&\Delta((-1)^{k+2}Q_n\Delta^{k+1}R_{n-k-2})+(-1)^{k+1}Q_n\Delta^{k}(Q_{n-k-1}R^2_{n-k-1}).
\end{align*}
Using formula \eqref{Leib-rule}, the last term on the right hand side of the above equation yields
\begin{align*}
 & (-1)^{k+1}Q_n\Delta^{k}(Q_{n-k-1}R^2_{n-k-1})\\
= & (-1)^{k+2}Q_n\Delta^{k}(R_{n-k-1}u_{0,n-k-1})\\
=& (-1)^{k+2}\sum^{k}_{j=0} \mathrm{C}^j_k \,Q_n(\Delta^{k-j}R_{n-k-j-1})(\Delta^j u_{0,n-k-1})\\
=& -\sum^{k}_{j=0} (-1)^j \mathrm{C}^j_k \,u_{k+1-j,n}\Delta^j u_{0,n-k-1},
\end{align*}
which is just canceled by the last term on the right hand side of \eqref{uk+2}.
Thus we reach
\[\Delta u_{k+2,n} =\Delta((-1)^{k+2}Q_n\Delta^{k+1}R_{n-k-2}),\]
i.e.,
\[u_{k+2,n} =(-1)^{k+2}Q_n\Delta^{k+1}R_{n-k-2}.\]
Based on  mathematical induction, we complete the proof.
\end{proof}

With Lemma \ref{Lem-2} in hand and making use of formula \eqref{Leib-rule} (for $s=-1$),
we immediately find
\begin{equation}
(\mathfrak{L})_{-}=\sum^{\infty}_{j=1}u_{j,n}\Delta^{-j}=-Q_n\Delta^{-1}R_n.
\label{kp-cons}
\end{equation}
As a result we reach the follow theorem.

\begin{thm}\label{Thm-1}
Under symmetry constraint \eqref{u0-SC} where $Q_n$ and $R_n$ obey \eqref{QRx-SC}, the spectral problem \eqref{Lax-ta} is written as \eqref{sp-SC}.
\end{thm}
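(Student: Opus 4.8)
The plan is to combine Lemma~\ref{Lem-2} with the discrete Leibniz rule \eqref{Leib-rule} to evaluate the pseudo-difference operator $\mathfrak{L}$ explicitly under the constraint, and then feed the result into \eqref{Lax-ta}. First I would recall that by construction $\mathfrak{L}=(\mathfrak{L})_{+}+(\mathfrak{L})_{-}$, where the difference part is $(\mathfrak{L})_{+}=\Delta+u_{0,n}$ and, under the constraint \eqref{u0-SC}, $u_{0,n}=-Q_nR_n$. Thus the whole content of the statement is contained in the identity $(\mathfrak{L})_{-}=\sum_{j\ge 1}u_{j,n}\Delta^{-j}=-Q_n\Delta^{-1}R_n$, which is exactly \eqref{kp-cons}; so the proof is essentially a one-line consequence of that formula once it is in hand.

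The key step is therefore to establish \eqref{kp-cons}. For this I would substitute the closed form $u_{j+1,n}=(-1)^{j+1}Q_n\Delta^{j}R_{n-j-1}$ from Lemma~\ref{Lem-2} into $\sum_{j\ge1}u_{j,n}\Delta^{-j}$, giving $\sum_{k\ge0}(-1)^{k+1}Q_n(\Delta^{k}R_{n-k-1})\Delta^{-(k+1)}$. On the other side, I would expand $-Q_n\Delta^{-1}R_n$ using the Leibniz rule \eqref{Leib-rule} with $s=-1$: since $\mathrm{C}_{-1}^{i}=(-1)^{i}$, one gets $\Delta^{-1}R_n=\sum_{i\ge0}(-1)^{i}(\Delta^{i}R_{n-1-i})\Delta^{-1-i}$, and multiplying by $-Q_n$ reproduces term-by-term the series just obtained from Lemma~\ref{Lem-2}. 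Matching the coefficient of each $\Delta^{-1-i}$ confirms the identity. Having \eqref{kp-cons}, I would then simply write $\mathfrak{L}=\Delta+u_{0,n}-Q_n\Delta^{-1}R_n=\Delta-Q_nR_n-Q_n\Delta^{-1}R_n=\mathcal{L}$, and the eigenvalue equation \eqref{Lax-ta} becomes $\mathcal{L}\varphi_n=\xi\varphi_n$, which is precisely \eqref{sp-SC}.

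The main obstacle, modest as it is, lies in the bookkeeping of the shift indices when applying \eqref{Leib-rule} for the negative power $s=-1$: one must be careful that the infinite Leibniz expansion of $\Delta^{-1}$ acting on the coefficient $R_n$ is applied consistently with the conventions fixed in \eqref{Leib-rule}--\eqref{Csi} (in particular that $\mathrm{C}_{-1}^{i}=(-1)^i$ and that the argument of $R$ is shifted by $s-i=-1-i$), and that the resulting doubly-indexed series is rearranged legitimately — which is fine here because $Q_n,R_n$ vanish rapidly at infinity, so each coefficient of a fixed power $\Delta^{-m}$ is a finite combination. No convergence subtlety beyond that arises, since the comparison is purely formal, coefficient by coefficient in the powers of $\Delta$. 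Once the two series are aligned, the rest of the argument is immediate from Theorem-level substitution, so I would keep the writeup short and direct, citing Lemma~\ref{Lem-2} and \eqref{Leib-rule} for the one genuine computation.
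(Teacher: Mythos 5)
Your proposal is correct and follows exactly the paper's route: invoke Lemma~\ref{Lem-2} for the closed form of the $u_{j,n}$, resum $\sum_{j\ge1}u_{j,n}\Delta^{-j}$ via the Leibniz rule \eqref{Leib-rule} with $s=-1$ (using $\mathrm{C}_{-1}^i=(-1)^i$) to obtain \eqref{kp-cons}, and then read off $\mathfrak{L}=\Delta-Q_nR_n-Q_n\Delta^{-1}R_n$. The only difference is that you spell out the term-by-term matching that the paper compresses into ``we immediately find.''
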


Note that in \cite{CheLH-JNMP-2013} \eqref{kp-cons} is called constrained discrete KP hierarchy,
which is from here actually a result of the symmetry constraint \eqref{u0-SC} together with \eqref{QRx-SC}.

\subsection{The sdAKNS hierarchy from symmetry constraint}\label{Sec-4-2}

There is a sdAKNS hierarchy coming from \eqref{Lax-tc} and its adjoint form
under the constraint \eqref{u0-SC}. This agrees with the continuum limit of the D$^2\Delta$KP hierarchy
and symmetry constraint of the continuous KP hierarchy (cf.\cite{KS-IP-1991,CL-JPA-1992,Chen-book-2006})

In the following we prove
\begin{thm}\label{Thm-1-1}
For the pseudo-difference operator
\begin{equation}
\mathfrak{L}=\Delta-Q_nR_n-Q_n\Delta^{-1}R_n,
\label{L-sc}
\end{equation}
define $A_m=(\mathfrak{L}^m)_{+}$ and $A^*_m$ to be the adjoint operator of $A_m$ w.r.t. the bilinear form \eqref{bil}.
Then
\begin{subequations}
\begin{align}
& Q_{n,t_m}=A_m Q_n,\label{Q-eq}\\
& R_{n,t_m}=-A^*_m R_n \label{R-eq}
\end{align}
\label{QR-eq}
\end{subequations}
provide a recursive relation
\begin{equation}
\left(\begin{array}{c}Q_n\\R_n\end{array}\right)_{t_{m+1}}
=\mathcal{L}^{(+)}\left(\begin{array}{c}Q_n\\R_n\end{array}\right)_{t_{m}},
\label{hie-sc}
\end{equation}
where
\begin{equation}
{\mathcal{L}}^{(+)}=\left(
       \begin{array}{cc}
         \Delta-Q_n(E+1)\Delta^{-1} R_n  - Q_n R_n & -Q_n(E+1)\Delta^{-1} Q_n \\
         R_n(E+1)\Delta^{-1} R_n & -\Delta E^{-1}+R_n(E+1)\Delta^{-1} Q_n  - Q_n R_n
       \end{array}
     \right).
\label{L-rec+}
\end{equation}
\eqref{hie-sc} generates a sdAKNS hierarchy (see \eqref{hie-iso-sdakns+}).
\end{thm}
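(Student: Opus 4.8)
The plan is to convert \eqref{QR-eq} into a one-step recursion at the level of the operators $A_m=(\mathfrak{L}^m)_{+}$ and then read off \eqref{hie-sc} by letting the resulting operator identity act on the functions $Q_n$ and $R_n$. Write $\mathfrak{L}=A_1+(\mathfrak{L})_{-}$ with $A_1=\Delta-Q_nR_n$ and, by \eqref{kp-cons}, $(\mathfrak{L})_{-}=-Q_n\Delta^{-1}R_n$. Starting from $\mathfrak{L}^{m+1}=\mathfrak{L}\circ\mathfrak{L}^m$ and separating the difference and pseudo-difference parts, the fact that $A_1$ has order $1$, $(\mathfrak{L})_{-}$ order $-1$ and $(\mathfrak{L}^m)_{-}$ order $\le-1$ forces only one ``residue'' to survive from each nonlocal factor; with the discrete Leibniz rule \eqref{Leib-rule} (in particular $\mathrm{C}_{-1}^{i}=(-1)^i$, already used after Lemma \ref{Lem-2}) one gets
\[
A_{m+1}=A_1A_m-Q_n\bigl(\Delta^{-1}R_nA_m\bigr)_{+}+E\bigl(\mathrm{res}\,\mathfrak{L}^m\bigr),
\]
where $\mathrm{res}$ denotes the coefficient of $\Delta^{-1}$ and $E(\mathrm{res}\,\mathfrak{L}^m)$ is multiplication by the shifted function $(\mathrm{res}\,\mathfrak{L}^m)_{n+1}$.

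Applying this identity to $Q_n$ and using \eqref{Q-eq}, the first term is $(\Delta-Q_nR_n)Q_{n,t_m}$. For the middle one I would invoke the discrete Lagrange identity $\bigl(\Delta^{-1}\circ g\circ P\bigr)_{-}=\Delta^{-1}\circ(P^{*}g)$, valid for any difference operator $P$ because $g\circ P-P^{*}\!\circ g$ lies in $\Delta\cdot\{\text{difference operators}\}$; with $g=R_n$, $P=A_m$ and \eqref{R-eq} this yields $\bigl(\Delta^{-1}R_nA_m\bigr)_{+}Q_n=\Delta^{-1}\bigl(R_nA_mQ_n-Q_nA_m^{*}R_n\bigr)=\Delta^{-1}\bigl(R_nQ_{n,t_m}+Q_nR_{n,t_m}\bigr)$. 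For the residue term, extract the $\Delta^0$-coefficient of the Lax equation \eqref{Lax-cb}: since $[A_m,\mathfrak{L}]=[\mathfrak{L},(\mathfrak{L}^m)_{-}]$ and only $[\Delta,(\mathrm{res}\,\mathfrak{L}^m)\Delta^{-1}]$ contributes at order $0$, one finds $u_{0,n,t_m}=\Delta\,\mathrm{res}\,\mathfrak{L}^m$, i.e.\ $\mathrm{res}\,\mathfrak{L}^m=\Delta^{-1}u_{0,n,t_m}=-\Delta^{-1}(Q_nR_n)_{t_m}=-\Delta^{-1}\bigl(R_nQ_{n,t_m}+Q_nR_{n,t_m}\bigr)$ (for $m=1$ this is just the first relation in \eqref{uk-u0} together with $t_1=x$). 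Substituting, the two contributions $-Q_n\Delta^{-1}(\cdots)$ and $\bigl(E\,\mathrm{res}\,\mathfrak{L}^m\bigr)Q_n=-Q_nE\Delta^{-1}(\cdots)$ merge into $-Q_n(E+1)\Delta^{-1}\bigl(R_nQ_{n,t_m}+Q_nR_{n,t_m}\bigr)$, and $A_{m+1}Q_n$ coincides with the first row of $\mathcal{L}^{(+)}\left(\begin{array}{c}Q_n\\R_n\end{array}\right)_{t_m}$ from \eqref{L-rec+}.

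The equation for $R_n$ is the mirror computation. Taking formal adjoints preserves the $\pm$ splitting, so $A_{m+1}^{*}=\bigl((\mathfrak{L}^{*})^{m+1}\bigr)_{+}$, and repeating the argument with the opposite factorisation $\mathfrak{L}^{m+1}=\mathfrak{L}^m\circ\mathfrak{L}$ (so that $A_1^{*}$, not $A_m^{*}$, ends up acting on the outermost factor) produces the dual operator identity with $Q_n\leftrightarrow R_n$ and $\Delta\leftrightarrow\Delta^{*}=-\Delta E^{-1}$. Applied to $R_n$, using \eqref{Q-eq}, \eqref{R-eq} and the same two identities, it reproduces the second row of \eqref{L-rec+}; the change of sign before the $(E+1)\Delta^{-1}$ block is forced by the minus sign in \eqref{R-eq}. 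Assembling the two rows is precisely \eqref{hie-sc}; checking it directly at $m=0$ (where $A_0=1$, $Q_{n,t_0}=Q_n$, $R_{n,t_0}=-R_n$, and the two $\Delta^{-1}$ terms cancel because $Q_nR_n=R_nQ_n$) gives the base case, and iterating then yields the sdAKNS hierarchy \eqref{hie-iso-sdakns+}.

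I expect the delicate point to be bookkeeping rather than a single hard idea: in the first step one must verify that every $\Delta^{-j}$ with $j\ge2$, and every binomial-weighted shift term thrown off by \eqref{Leib-rule}, cancels or telescopes so that exactly the shift operator $E+1$ survives (the same kind of cancellation already met in the proof of Lemma \ref{Lem-2}), and one must coordinate the two mutually adjoint computations (why $\mathfrak{L}\circ\mathfrak{L}^m$ is used on the $Q_n$ side and $\mathfrak{L}^m\circ\mathfrak{L}$ on the $R_n$ side) so that all signs and shifts match, in particular in the off-diagonal entries $-Q_n(E+1)\Delta^{-1}Q_n$ and $R_n(E+1)\Delta^{-1}R_n$. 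A minor but necessary point throughout is to normalise $\Delta^{-1}$ consistently (as the decaying antidifference on $\mathcal{M}$), so that $\mathcal{L}^{(+)}$ is a well-defined operator.
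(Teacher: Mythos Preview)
Your outline is essentially the paper's proof: the two facts you invoke—the residue relation $\Delta\,\mathrm{res}\,\mathfrak{L}^m=-(Q_nR_n)_{t_m}$ extracted from the Lax equation, and the identity $(Q_n\Delta^{-1}R_nA_m)_{-}=Q_n\Delta^{-1}(A_m^{*}R_n)$—are precisely Lemmas~\ref{Lem-SC-1} and~\ref{Lem-SC-2}, and the two factorisations $\mathfrak{L}\cdot\mathfrak{L}^m$ versus $\mathfrak{L}^m\cdot\mathfrak{L}$ are used exactly as you describe.

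There is one genuine slip on the $R_n$ side. The claim ``taking formal adjoints preserves the $\pm$ splitting, so $A_{m+1}^{*}=((\mathfrak{L}^{*})^{m+1})_{+}$'' is false in the $\Delta$-pseudo-difference algebra: already $\Delta^{*}=-\Delta E^{-1}=-1+\Delta^{-1}-\Delta^{-2}+\cdots$ and $(\Delta^{-1})^{*}=-E\Delta^{-1}=-1-\Delta^{-1}$ each have nonzero $+$ part, so $(\cdot)_{+}$ and ${}^{*}$ do not commute. The paper sidesteps this entirely: it computes $A_{m+1}$ a \emph{second} time from $\mathfrak{L}^m\cdot\mathfrak{L}$, using the companion identity $(A_mQ_n\Delta^{-1}R_n)_{-}=(A_mQ_n)\Delta^{-1}R_n$ (the other half of Lemma~\ref{Lem-SC-2}), obtains an explicit operator formula, and only then applies ${}^{*}$ term by term to reach \eqref{A-m+1*}; acting with that on $R_n$ gives the second row of $\mathcal{L}^{(+)}$. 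Your phrase ``repeating the argument with the opposite factorisation'' should be executed in this order, not by working inside an adjoint pseudo-difference algebra. A related minor point: in your Lagrange-identity justification, ``$g\circ P-P^{*}\!\circ g$'' should read ``$g\circ P-(P^{*}g)$'' with $P^{*}g$ the \emph{function}; it is this difference, not the operator composition, that lies in $\Delta\cdot\{\text{difference operators}\}$.
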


To prove the theorem we first give two lemmas.

\begin{lem}\label{Lem-SC-1}
Suppose that $p^{(m)}_{-1,n}=\mathrm{Res}_{\Delta}\mathfrak{L}^m$, i.e. $p^{(m)}_{-1,n}$ is the coefficient of $\Delta^{-1}$ term in $\mathfrak{L}^m$.
Then we have
\begin{equation}
\Delta p^{(m)}_{-1,n}=-(Q_n R_n)_{t_m}.
\label{p-1}
\end{equation}
\end{lem}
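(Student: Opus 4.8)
The plan is to exploit the standard fact that the residue of a commutator of (pseudo-)difference operators is a total difference, and then to identify the resulting total difference with $-(Q_nR_n)_{t_m}$ using the flow equations \eqref{QR-eq}. First I would recall the evolution of $\mathfrak{L}$ along the $t_m$-flow. Differentiating the constraint form \eqref{L-sc} and using \eqref{Q-eq}--\eqref{R-eq}, one gets a Lax-type equation $\mathfrak{L}_{t_m}=[A_m,\mathfrak{L}]$ (this is the constrained analogue of \eqref{Lax-cb}; it holds because $A_m=(\mathfrak{L}^m)_+$ and $A_m^*$ is the adjoint, so the action of $A_m$ on $Q_n$ and of $-A_m^*$ on $R_n$ is exactly what is needed for $\Delta+u_{0,n}+\dots$ to evolve by the commutator). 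Taking the $\Delta^{-1}$-coefficient (the residue $\mathrm{Res}_\Delta$) of both sides, the left side is $\partial_{t_m}\bigl(\mathrm{Res}_\Delta \mathfrak{L}^m\bigr)$... wait, more carefully: I would instead start from $\partial_{t_m}\mathfrak{L}^m=[A_m,\mathfrak{L}^m]$ and take residues, but that gives $\partial_{t_m}p^{(m)}_{-1,n}$, not $\Delta p^{(m)}_{-1,n}$. So the cleaner route is the second one below.

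The better approach: write $\mathfrak{L}^m = A_m + (\mathfrak{L}^m)_-$ with $(\mathfrak{L}^m)_- = \sum_{j\ge 1} p^{(m)}_{-j,n}\Delta^{-j}$ and $p^{(m)}_{-1,n}=\mathrm{Res}_\Delta\mathfrak{L}^m$. The key identity I would use is that for any difference operator $A$ and the constraint-type operator, $\mathrm{Res}_\Delta[A,\mathfrak{L}]$ equals a total difference $\Delta(\text{something})$. Concretely, from $u_{0,n,t_m}=\mathcal{K}_m$ in the hierarchy \eqref{DDKP-hie}, and the fact that under the constraint $u_{0,n}=-Q_nR_n$ (equation \eqref{u0-SC}), one has $u_{0,n,t_m}=-(Q_nR_n)_{t_m}$. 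So it suffices to show $\Delta p^{(m)}_{-1,n}=u_{0,n,t_m}$ under the constraint. This last equality I would obtain from the $\mathfrak{L}_{t_m}=[A_m,\mathfrak{L}]$ relation by extracting the $\Delta^0$-coefficient: the $\Delta^0$ part of $\mathfrak{L}$ is $u_{0,n}$, and the $\Delta^0$ part of $[A_m,\mathfrak{L}]$, after using the Leibniz rule \eqref{Leib-rule} for $A_m$ a difference operator and $\mathfrak{L}=\Delta+u_{0,n}+(\mathfrak{L})_-$, produces exactly $\Delta\bigl(\mathrm{Res}_\Delta \mathfrak{L}^m\bigr)$ from the cross term between $A_m$ and the $\Delta^{-1}$-tail, with all other contributions to the $\Delta^0$ coefficient either cancelling or being absorbed. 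That is the computational heart of the argument.

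The main obstacle I anticipate is the bookkeeping in that last extraction: carefully tracking which terms of $[A_m,\mathfrak{L}]$ land in the $\Delta^0$ slot, in particular that the contribution of $A_m$ acting against $\Delta$ and against $u_{0,n}$ (both difference operators) contributes only to nonnegative powers and cancels in the commutator, so that the only net $\Delta^0$ contribution involving the residue comes from $[A_m,(\mathfrak{L})_-]$ via the $s=-1$ case of the Leibniz rule \eqref{Leib-rule}. Once one writes $A_m=\sum_{i\ge 0}a^{(m)}_{i,n}\Delta^i$ and computes $\mathrm{Res}_\Delta\bigl(a^{(m)}_{i,n}\Delta^i\cdot p^{(m')}_{-j,n}\Delta^{-j} - p^{(m')}_{-j,n}\Delta^{-j}\cdot a^{(m)}_{i,n}\Delta^i\bigr)$, the total-difference structure emerges because $\mathrm{Res}_\Delta[A,B]$ is always of the form $\Delta(\cdot)$ for a difference operator $A$ — this is the discrete analogue of Adler's theorem and may already be implicit in \cite{FHTZ-Non-2013,KT-SCF-1997}, which I would cite if available. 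Combining $\Delta p^{(m)}_{-1,n}=u_{0,n,t_m}=-(Q_nR_n)_{t_m}$ then gives \eqref{p-1} and completes the proof.
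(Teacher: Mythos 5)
Your starting point coincides with the paper's: compare the $\Delta^0$ (constant) coefficients of the Lax equation $\mathfrak{L}_{t_m}=[A_m,\mathfrak{L}]$ for the constrained operator \eqref{L-sc}; the left side gives $-(Q_nR_n)_{t_m}$, so everything hinges on showing that the constant term of $[A_m,\mathfrak{L}]$ equals $\Delta p^{(m)}_{-1,n}$. That is exactly where your argument has a gap. You propose to extract this directly from $[A_m,\mathfrak{L}]$ by isolating ``cross terms between $A_m$ and the $\Delta^{-1}$-tail of $\mathfrak{L}$'' while asserting that the contributions of $[A_m,\Delta+u_{0,n}]$ cancel. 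Neither claim holds as stated: $[A_m,\Delta+u_{0,n}]$ is a genuine difference operator whose $\Delta^0$ coefficient does not vanish (already $[a_{0,n},\Delta]$ contributes $-\Delta a_{0,n}$ to the constant slot), and the cross terms with $(\mathfrak{L})_-$ involve the coefficients $u_{j,n}$ of $\mathfrak{L}$, not the residue $p^{(m)}_{-1,n}$ of $\mathfrak{L}^m$, so no mechanism is visible by which $\Delta p^{(m)}_{-1,n}$ emerges. You also invoke the discrete Adler theorem (that $\mathrm{Res}_\Delta$ of a commutator is a total difference), but that concerns the $\Delta^{-1}$ coefficient, not the $\Delta^{0}$ coefficient needed here.

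The missing idea is a one-line substitution: since $[\mathfrak{L}^m,\mathfrak{L}]=0$, one has $[A_m,\mathfrak{L}]=[(\mathfrak{L}^m)_+,\mathfrak{L}]=-[(\mathfrak{L}^m)_-,\mathfrak{L}]$. In the latter commutator the only term that can reach order $\Delta^0$ is $-[p^{(m)}_{-1,n}\Delta^{-1},\Delta]$, and \eqref{Leib-rule} gives $\Delta\circ p^{(m)}_{-1,n}\Delta^{-1}=(Ep^{(m)}_{-1,n})+(\Delta p^{(m)}_{-1,n})\Delta^{-1}$, so the constant term of $-[(\mathfrak{L}^m)_-,\mathfrak{L}]$ is $(Ep^{(m)}_{-1,n})-p^{(m)}_{-1,n}=\Delta p^{(m)}_{-1,n}$; all the bookkeeping you anticipate disappears. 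This is precisely the paper's proof. With that substitution inserted your argument closes; without it the ``computational heart'' you describe does not go through.
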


\begin{proof}
Comparing the constant terms of the left and right hand sides of the Lax equation
\[\mathfrak{L}_{t_m}=[A_m,\mathfrak{L}]=-[(\mathfrak{L}^m)_{-}, \mathfrak{L}],\]
one immediately obtains \eqref{p-1}.
Here $(\mathfrak{L}^m)_{-}=\mathfrak{L}^m-A_m$.
\end{proof}

\begin{lem}\label{Lem-SC-2}
The following relations hold,
\begin{align}
& (Q_n\Delta^{-1} R_n A_m)_{+}= Q_n\Delta^{-1} R_n A_m -Q_n\Delta^{-1} (A^*_m R_n), \label{Am-1}\\
& (A_m Q_n\Delta^{-1} R_n)_{+}= A_m Q_n\Delta^{-1} R_n -(A_m Q_n)\Delta^{-1} R_n. \label{Am-2}
\end{align}
\end{lem}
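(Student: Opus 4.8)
The plan is to prove Lemma~\ref{Lem-SC-2} by a direct computation using the decomposition $\mathfrak{L}^m=A_m+(\mathfrak{L}^m)_{-}$ together with the discrete Leibniz rule \eqref{Leib-rule}, and then (as the statement anticipates) feed Lemmas \ref{Lem-SC-1} and \ref{Lem-SC-2} into the proof of Theorem~\ref{Thm-1-1}. For the two identities in the lemma itself, the key observation is that $\Delta^{-1}R_n$ and $Q_n\Delta^{-1}$ are ``almost projectors'': multiplying an operator on the left by $Q_n\Delta^{-1}R_n$ and taking the difference part only removes the genuinely negative tail coming from $A_m$, not from the $\Delta^{-1}$ prefactor. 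Concretely, for \eqref{Am-1} I would write $Q_n\Delta^{-1}R_nA_m = Q_n\Delta^{-1}(R_nA_m)$ and use the formal-adjoint relation $\langle f, R_nA_m g\rangle = \langle A_m^*(R_n f),g\rangle$ in operator form, i.e. $R_nA_m = (A_m^*R_n)^{\dagger}+(\text{difference operator})$; more cleanly, one uses that for any difference operator $B=\sum_{j\ge 0} b_{j,n}\Delta^j$ one has $\Delta^{-1}B = (\Delta^{-1}B)_+ + (\text{something})\Delta^{-1}$, and identifies the $\Delta^{-1}$ coefficient with $-A_m^*R_n$ acting appropriately. The identity \eqref{Am-2} is the mirror image: since $A_m$ is a purely differential (difference) operator of some order, $A_mQ_n\Delta^{-1}R_n$ differs from its difference part exactly by the term $(A_mQ_n)\Delta^{-1}R_n$, because applying $A_m$ to the product $Q_n\cdot(\Delta^{-1}R_n\,\cdot)$ and expanding via \eqref{Leib-rule} produces one term with $\Delta^{-1}$ surviving (the one where all the differencing hits $\Delta^{-1}$'s argument) plus strictly nonnegative powers.

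A cleaner route I would actually prefer is the following. Recall the standard fact that for pseudo-difference operators $X,Y$ one has $\mathrm{Res}_\Delta(XY) = \mathrm{Res}_\Delta(YX)$ up to a total difference, and more usefully the ``residue pairing'' identity $(\,PQ_n\Delta^{-1}R_n\,)_+ = P_+Q_n\Delta^{-1}R_n + (\,(PQ_n)\Delta^{-1}R_n\,)_- - \text{(correction)}$. To avoid heavy machinery, I will instead argue termwise. Write $A_m=\sum_{k=0}^{M}a_{k,n}\Delta^k$. For \eqref{Am-2}, compute $A_m\bigl(Q_n\Delta^{-1}R_n\bigr)$ by first noting $\Delta^{-1}R_n$ is a fixed pseudo-difference operator with leading term $R_{n-1}\Delta^{-1}$ (from \eqref{Leib-rule} with $s=-1$), then applying each $a_{k,n}\Delta^k Q_n$ via the Leibniz rule; the only $\Delta^{-1}$-or-lower contribution is $\sum_k a_{k,n}(\Delta^k Q_n)\,\Delta^{-1}R_n$-type terms which reassemble precisely into $(A_mQ_n)\Delta^{-1}R_n$, and everything else is a genuine difference operator. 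Hence $(A_mQ_n\Delta^{-1}R_n)_+ = A_mQ_n\Delta^{-1}R_n - (A_mQ_n)\Delta^{-1}R_n$. Identity \eqref{Am-1} follows by the analogous computation placing $A_m$ on the right: $Q_n\Delta^{-1}R_nA_m = Q_n\Delta^{-1}(R_nA_m)$, and $R_nA_m$, being differential of order $M$, satisfies $R_nA_m = (R_nA_m)$ with $\Delta^{-1}(R_nA_m)$ having negative part $-Q_n\Delta^{-1}(A_m^*R_n)$ by the defining property of the formal adjoint, so that $(Q_n\Delta^{-1}R_nA_m)_+ = Q_n\Delta^{-1}R_nA_m - Q_n\Delta^{-1}(A_m^*R_n)$.

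Once the lemma is available, the proof of Theorem~\ref{Thm-1-1} proceeds as follows. Starting from $\mathfrak{L}_{t_m}=[A_m,\mathfrak{L}]$ with $\mathfrak{L}$ given by \eqref{L-sc}, I would split the Lax equation into its difference part and its strictly negative part. The strictly negative part encodes $(Q_nR_n)_{t_m}$ and $(Q_n\Delta^{-1}R_n)_{t_m}$; using Lemma~\ref{Lem-SC-1} to identify $\Delta p^{(m)}_{-1,n}=-(Q_nR_n)_{t_m}$ and then peeling off the $\Delta^{-1}$-coefficient equation, one gets expressions for $Q_{n,t_m}$ and $R_{n,t_m}$ that are exactly \eqref{Q-eq}--\eqref{R-eq}. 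Next, to pass from level $m$ to level $m+1$, write $A_{m+1}=(\mathfrak{L}^{m+1})_+ = (\mathfrak{L}\cdot\mathfrak{L}^m)_+ = (\Delta\,\mathfrak{L}^m)_+ - (Q_nR_n)(\mathfrak{L}^m)_+ - (Q_n\Delta^{-1}R_n\,\mathfrak{L}^m)_+$, and symmetrically $A_{m+1}=(\mathfrak{L}^m\mathfrak{L})_+$. Applying these to $Q_n$ (resp.\ $A_{m+1}^*$ to $R_n$) and using Lemma~\ref{Lem-SC-2} to rewrite $(Q_n\Delta^{-1}R_n\mathfrak{L}^m)_+$ and $(A_m Q_n\Delta^{-1}R_n)$-type terms, the ``nonlocal'' $\Delta^{-1}$ pieces collapse into expressions involving $A_mQ_n=Q_{n,t_m}$ and $A_m^*R_n=-R_{n,t_m}$ (by \eqref{Q-eq}--\eqref{R-eq}), plus $\Delta^{-1}$ of those. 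Collecting the $Q$- and $R$-components yields precisely the operator $\mathcal{L}^{(+)}$ in \eqref{L-rec+} acting on $(Q_n,R_n)^T_{t_m}$, establishing \eqref{hie-sc}. The main obstacle I anticipate is purely bookkeeping: keeping the shift operators $E,E^{-1}$ and the $(E+1)\Delta^{-1}$ combinations straight when commuting $\Delta^{\pm 1}$ past the multiplication operators $Q_n,R_n$ via \eqref{Leib-rule}, and correctly matching the sign/shift conventions so that the nonlocal terms telescope into exactly the claimed form of $\mathcal{L}^{(+)}$ rather than a gauge-equivalent variant.
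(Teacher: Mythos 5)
Your proposal is correct and follows essentially the same route as the paper: expand $A_m=\sum_{j}a_{j,n}\Delta^{m-j}$, push $\Delta^{-1}$ through the coefficients with the discrete Leibniz rule \eqref{Leib-rule}, and read off that the strictly negative parts are $Q_n\Delta^{-1}(A_m^*R_n)$ and $(A_mQ_n)\Delta^{-1}R_n$, respectively. The only place you are lighter than the paper is \eqref{Am-1}, where the paper does not merely cite the formal-adjoint identity but derives it by an explicit rearrangement and telescoping of the resulting double sum (using $R_n\to 0$ as $|n|\to\infty$); the termwise computation you sketch would reproduce exactly that step.
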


\begin{proof}
We prove them one by one. For \eqref{Am-1} we only need to prove
\begin{equation}
(Q_n\Delta^{-1} R_n A_m)_{-} = Q_n\Delta^{-1} (A^*_m R_n).
\label{eq1}
\end{equation}
In fact, supposing that $A_m=\sum^{m}_{j=0}a_{j,n}\Delta^{m-j}$ and noting that $R_{n}\to 0$ as $|n|\to \infty$, we have
\begin{align*}
(Q_n\Delta^{-1} R_n A_m)_{-} = & \Bigl( Q_n \Delta^{-1} R_n \sum^{m}_{j=0}a_{j,n}\Delta^{m-j}\Bigr)_{-}\\
 = & \Bigl[ Q_n \sum^{m}_{j=0}\sum ^{\infty}_{s=1}(-1)^{s-1}(\Delta^{s-1}E^{-s} R_n a_{j,n})\Delta^{m-j-s}\Bigr]_{-}\\
 = &Q_n \sum ^{\infty}_{l=1}\Bigl[ (-1)^{l-1} \Delta^{l-1}E^{-l} \sum^{m}_{j=0}(-1)^{m-j}(\Delta^{m-j}E^{-(m-j)}a_{j,n} R_n )\Bigr]\Delta^{-l}  \\
 = &Q_n \sum ^{\infty}_{l=1}\Bigl[ (-1)^{l-1} \Delta^{l-1}E^{-l} (A^*_m R_n )\Bigr]\Delta^{-l}  \\
 = &Q_n\Delta^{-1}\Delta  \sum ^{\infty}_{l=1}\Bigl[ (-1)^{l-1} \Delta^{l-1}E^{-l} (A^*_m R_n )\Bigr]\Delta^{-l}  \\
 = &Q_n\Delta^{-1} \sum ^{\infty}_{l=1}\Bigl\{ [(\Delta^*)^{l-1} (A^*_m R_n) ]\Delta^{-l+1}-[(\Delta^*)^{l}(A^*_m R_n) ]\Delta^{-l}\Bigr\}  \\
 = &Q_n\Delta^{-1}(A^*_m R_n),
\end{align*}
i.e. \eqref{Am-1}.

Next, we prove \eqref{Am-2}. To calculate $(A_m Q_n\Delta^{-1} R_n)_{-}$ we rewrite the operator $A_m Q_n$ as
as a form of pure difference operator $A_m Q_n=\sum^m_{j=0}b_{j,n}\Delta^{m-j}$
in which only the constant term $b_{m,n}$ makes sense in $(A_m Q_n\Delta^{-1} R_n)_{-}$.
Since $b_{m,n}=(A_m Q_n)$ we immediately find $(A_m Q_n\Delta^{-1} R_n)_{-}=(A_m Q_n)\Delta^{-1} R_n$,
which leads to the relation \eqref{Am-2}.
\end{proof}

The lemma also implies a relation\cite{YLZ-JPA-2009}
\[[A_m, Q_n\Delta^{-1} R_n]_{-}= (A_m Q_n)\Delta^{-1} R_n-Q_n\Delta^{-1} (A^*_m R_n).\]

Now we come to the proof of Theorem \ref{Thm-1-1}.

\vskip 5pt
\noindent
\textit{Proof of Theorem \ref{Thm-1-1}.}
First,
\begin{align*}
A_{m+1}=& \Bigl[(\Delta-Q_nR_n-Q_n\Delta^{-1}R_n)(A_m+p^{(m)}_{-1,n}\Delta^{-1})\Bigr]_{+}\\
       =& \Delta A_m - Q_nR_n A_m - [E \Delta^{-1}(Q_n R_{n,t_m}+R_n Q_{n,t_m})]- (Q_n\Delta^{-1}R_n A_m)_{+},
\end{align*}
where we have made use of Lemma \ref{Lem-SC-1}. Substituting \eqref{Am-1} into the above we find
\begin{equation}
A_{m+1}= \Delta A_m - Q_nR_n A_m - Q_n\Delta^{-1}R_n A_m - Q_n\Delta^{-1} R_{n,t_m}- [E\Delta^{-1}(Q_n R_{n,t_m}+R_n Q_{n,t_m})].
\label{A-m+1}
\end{equation}
Note that the last term is a scalar.

Next, we calculate $A_{m+1}$ in another way:
\begin{align*}
A_{m+1}=& \Bigl[(A_m+p^{(m)}_{-1,n}\Delta^{-1})(\Delta-Q_nR_n-Q_n\Delta^{-1}R_n)\Bigr]_{+}\\
       =& A_m \Delta- A_m Q_nR_n +p^{(m)}_{-1,n} -(A_m Q_n\Delta^{-1}R_n)_{+}\\
       =& A_m \Delta- A_m Q_nR_n -A_m Q_n\Delta^{-1} R_n +Q_{n,t_m}\Delta^{-1} R_n-[\Delta^{-1}(Q_n R_{n,t_m}+R_n Q_{n,t_m})],
\end{align*}
where we have made use of Lemma \ref{Lem-SC-1} and \eqref{Am-2}.
Its adjoint form reads
\begin{equation}
A_{m+1}^* =\Delta^* A^*_m - Q_nR_n A^*_m + R_n E\Delta^{-1}Q_n A^*_m  -R_n E \Delta^{-1}Q_{n,t_m}-[\Delta^{-1}(Q_n R_{n,t_m}+R_n Q_{n,t_m})].
\label{A-m+1*}
\end{equation}

Now, imposing \eqref{A-m+1} on $Q_n$ and \eqref{A-m+1*} on $R_n$, respectively,
and making use of \eqref{QR-eq},
we arrive at the recursive relation \eqref{hie-sc}. Thus we complete the proof.

~ \hfill $\square$

Here we  remark that it is possible to prove that $(\varphi_n\b \varphi_n)_x$ is a symmetry of the whole D$^2\Delta$KP hierarchy.
Following the idea in \cite{Oev-PA-1993} on  additional symmetry
\begin{equation}
\mathfrak{L}_z=-[\varphi_n\Delta^{-1}\b\varphi_n, \mathfrak{L}],
\label{sym-add}
\end{equation}
where $\mathfrak{L}$ is the pseudo-difference operator \eqref{L},
$\varphi_n$ and $\b\varphi_n$ respectively satisfy \eqref{Lax-tc} and its adjoint form $\b\varphi_{n,t_j}=-A^*_j \b\varphi_n$,
and we additionally  request $\varphi_n$ and $\b\varphi_n$ satisfy \eqref{phi-x} as well due to the Lax triad \eqref{Lax-t},
equation \eqref{sym-add} yields for $\Delta^0$ term that
$u_{0,n,z}=\varphi_{n+1}\b\varphi_{n}-\varphi_{n}\b\varphi_{n-1}$,
which is $u_{0,n,z}=(\varphi_{n}\b\varphi_{n})_x$ under \eqref{phi-x}.
It has been proved that \cite{LCTLH-MM-2016} $[\partial_{t_j},\partial_z]\mathfrak{L}=0$,
which means $(\varphi_{n}\b\varphi_{n})_x$ and the D$^2\Delta$KP flows $\mathcal{K}_j$ defined in \eqref{DDKP-hie}
commute, i.e. $\llbr \mathcal{K}_j, (\varphi_{n}\b\varphi_{n})_x \rrbr=0$.
Thus, $(\varphi_{n}\b\varphi_{n})_x$ is a symmetry of the whole D$^2\Delta$KP hierarchy \eqref{DDKP-hie}.
Such a result can also be proved as in \cite{MSS-JMP-1990} for continuous case from another approach.
With this symmetry, the extended  D$^2\Delta$KP hierarchy \cite{YLZ-JPA-2009} can be interpreted as
a kind of symmetry constraints and then the sources provided by $(\varphi_{n}\b\varphi_{n})_x$  are automatically self-consistent.
We also remark that, as we can see, the additional condition \eqref{phi-x}, coming from the item in Lax triad for the independent variable $x$,
plays a significant role.
Anyway, by means of symmetry constraint \eqref{u0-SC},
one may study integrability properties of the sdAKNS hierarchy from a viewpoint of the D$^2\Delta$KP hierarchy.
This will be considered elsewhere.

\section{The sdAKNS hierarchies related to \eqref{sp-DT}}\label{Sec-5}

In this section we will first list isospectral and nonisospectral flows together with their Lie algebra derived from \eqref{sp-DT} in \cite{ZTOF-JMP-1991}.
As new results, from Lie algebraic structures of these flows we construct new symmetries of nonisospectral equations.
In addition, by considering infinite dimensional subalgebras and continuum limit of recursion operator,
we construct three types of isospectral and nonisospectral sdAKNS hierarchies.

\subsection{Flows related to \eqref{sp-DT} and their Lie algebra}\label{Sec-5-1}

In Ref.\cite{ZTOF-JMP-1991}, from spectral problem \eqref{sp-DT},
the following isospectral hierarchy $U_{n,t_s}=K_s$ and nonisospectral hierarchy $U_{n,t_s}=\sigma_s$ were derived:
\begin{equation}\label{hie-iso}
     U_{n,t_s} = K_s=L^s K_0,~~ K_0=\left(
                       \begin{array}{c}
                         Q_n \\
                         -R_n \\
                       \end{array}
                     \right), \quad s \in \mathbb{Z},
\end{equation}
\begin{equation}\label{hie-non}
     U_{n,t_s} = \sigma_s=L^s \sigma_0,~~ \sigma_0=\left(
                       \begin{array}{c}
                         (n+\frac{1}{2})Q_n \\
                         -(n-\frac{1}{2})R_n \\
                       \end{array}
                     \right), \quad s \in \mathbb{Z},
\end{equation}
where $U_n=(Q_n,R_n)^T$, $L$ is a recursion operator
\begin{equation}\label{rec-L}
  L= \left(
       \begin{array}{cc}
         E & 0 \\
         0 & E^{-1} \\
       \end{array}
     \right) -\left(
                         \begin{array}{c}
                           Q_n \\
                           -R_n \\
                         \end{array}
                       \right) (E+1)\Delta^{-1} (R_n,Q_n) - Q_n R_n,
\end{equation}
and its inverse $L^{-1}$ is
\begin{equation}\label{rec-L-1}
 L^{-1}= \left(
           \begin{array}{cc}
             E^{-1}\mu_n^{-1} & 0 \\
             0 & \mu_n^{-1} E \\
           \end{array}
         \right) + \left(
                     \begin{array}{c}
                       E^{-1}Q_n\mu_n^{-1}\\
                       -R_{n+1}\mu_n^{-1} \\
                     \end{array}
                   \right)(E+1)\Delta^{-1}( \mu_n^{-1} R_{n+1},\mu_n^{-1} Q_n E )
\end{equation}
with $\mu_n = 1+Q_nR_{n+1}$.
$L$ is a hereditary operator. The simplest ispospectral flows and nonisospectral flows are
\begin{subequations}
\begin{align}
& K_{-1} =  \left(
                               \begin{array}{c}
                                 {Q_{n-1}}/{\mu_{n-1}} \\
                                 -{R_{n+1}}/{\mu_n} \\
                               \end{array}
                             \right),~~~
K_{0} = \left(
                \begin{array}{c}
                  Q_n \\
                  -R_n \\
                \end{array}
              \right),
~~~K_{1} =  \left(
                           \begin{array}{c}
                             Q_{n+1}-Q_n^2R_n \\
                             -R_{n-1}+R_n^2Q_n \\
                           \end{array}
                         \right),\\
& K_{2} =  \left(
                        \begin{array}{c}
                             Q_{n+2}-Q_{n+1}^2 R_{n+1}-Q_n^2 R_{n-1}- 2Q_n Q_{n+1} R_n+ Q_n^3 R_n^2 \\
                             - R_{n-2}+R_{n-1}^2 Q_{n-1}+R_n^2 Q_{n+1}+ 2R_{n-1} R_n Q_n- R_n^3 Q_n^2 \\
                         \end{array}
                      \right),
\end{align}
\end{subequations}
and
\begin{subequations} \label{non-flows}
 \begin{align}
& \sigma_{-1} =  \left(
                                     \begin{array}{c}
                                       (n- 1/2)Q_{n-1}/\mu_{n-1} \\
                                     -(n+ 1/2)R_{n+1}/\mu_{n}  \\
                                     \end{array}\right),~~~
\sigma_{0} = \left(
                       \begin{array}{c}
                         (n+ 1/2)Q_n \\
                         -(n- 1/2)R_n \\
                       \end{array}
                     \right),             \label{non-flow-a}  \\
& \sigma_{1} = \left(
                                \begin{array}{c}
                                  (n+3/2)Q_{n+1} - (n+3/2)Q_n^2 R_n - 2 Q_n \Delta^{-1}Q_nR_n \\
                                 - (n-3/2)R_{n-1} + (n+1/2)R_n^2 Q_n + 2 R_n \Delta^{-1}R_nQ_n \\
                                \end{array}
                              \right).
                  \label{non-flow-b}
 \end{align}
\end{subequations}

The flows $K_s$ and $\sigma_k$ defined in \eqref{hie-iso} and \eqref{hie-non} constitute a
centerless Virasoro algebra \cite{ZTOF-JMP-1991},
\begin{subequations}\label{alg-k-s}
\begin{align}
&\llbr K_m,K_s \rrbr =0,\\
&\llbr K_m,\sigma_s \rrbr = m K_{m+s},\\
&\llbr \sigma_m,\sigma_s \rrbr =(m-s)\sigma_{m+s},~~~ m,s\in \mathbb{Z}.
\end{align}
\end{subequations}

\subsection{New symmetries of nonisospectral equations \eqref{hie-non}}\label{Sec-5-2}

Note that algebraic structure \eqref{alg-k-s} is the same as the one generated by the AL flows (see \cite{ZNBC-PLA-2006}).
This means the hierarchies \eqref{hie-iso} and \eqref{hie-non} and the AL hierarchies can share those results
obtained from the  algebraic structure \eqref{alg-k-s}.
One remarkable result is that the nonisospectral hierarchy \eqref{hie-non} posses symmetries.
In fact, it is very rare for a nonisospectral equation to have infinitely many symmetries. However,
making use of minus indices in the structure \eqref{alg-k-s}, infinitely many symmetries for the  nonisospectral hierarchy \eqref{hie-non}
can be constructed.

\begin{thm}
Any given member $U_{n,t_m}=\sigma_m$ in the nonisospectral hierarchy \eqref{hie-non} possesses two sets of symmetries,
\begin{subequations}
\begin{align}
&  \eta^{(m)}_{s}=\sum_{j=0}^s \mathrm{C}_s^j(mt_m)^{s-j}\sigma_{m-jm},~~~(s=0,1,2\cdots.),\\
& \gamma^{(m)}_{s}=\sum_{j=0}^s \mathrm{C}_s^j(mt_m)^{s-j}K_{-jm},~~~(s=0,1,2\cdots.),
\end{align}
\end{subequations}
and these symmetries form a centerless Virasoro algebra with  structure
\begin{subequations}
\begin{align}
&\llbr \gamma^{(m)}_{l},\gamma^{(m)}_{s} \rrbr =0,\\
&\llbr \gamma^{(m)}_{l},\eta^{(m)}_{s} \rrbr =-ml\gamma^{(m)}_{l+s-1},\\
&\llbr \eta^{(m)}_{l},\eta^{(m)}_{s}  \rrbr =-m(l-s)\eta^{(m)}_{l+s-1}.
\end{align}
\end{subequations}
Here the suffix $(m)$ corresponds to equation $U_{n,t_m}=\sigma_m$.
\end{thm}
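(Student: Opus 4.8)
\noindent\textit{Proof strategy.}
The plan is to verify both assertions directly from the definitions, using nothing beyond the Virasoro relations \eqref{alg-k-s} and the bilinearity of the bracket \eqref{def:comm}. First I would recall from \eqref{def:sym} that a $t_m$-dependent vector field $G$ is a symmetry of $U_{n,t_m}=\sigma_m$ exactly when $\partial_{t_m}G+\llbr G,\sigma_m\rrbr=0$, and observe that $\sigma_m=L^m\sigma_0$ is a function of $U_n$ and $n$ only, with no explicit $t_m$-dependence. Hence $\partial_{t_m}$ in this criterion acts solely on the scalar prefactors $(mt_m)^{s-j}$ occurring in $\eta^{(m)}_s$ and $\gamma^{(m)}_s$; and since those prefactors do not depend on $U_n$, the G\^ateaux bracket treats them as constants, so every bracket below collapses to a double sum built from the structure constants of \eqref{alg-k-s}. (Negative indices $\sigma_{-km},K_{-km}$ make sense thanks to the invertibility of $L$ recorded after \eqref{rec-L-1}.)

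Next I would check that $\eta^{(m)}_s$ is a symmetry. Term-by-term differentiation gives $\partial_{t_m}\eta^{(m)}_s=m\sum_j (s-j)\binom{s}{j}(mt_m)^{s-j-1}\sigma_{m-jm}$, while $\llbr\sigma_a,\sigma_b\rrbr=(a-b)\sigma_{a+b}$ gives $\llbr\eta^{(m)}_s,\sigma_m\rrbr=-m\sum_j j\binom{s}{j}(mt_m)^{s-j}\sigma_{m-(j-1)m}$. Re-indexing $j\mapsto j-1$ in the first sum makes both sums run over the same vectors $\sigma_{m-(j-1)m}$ with the same power of $mt_m$, and their cancellation is precisely the absorption identity $(s-j+1)\binom{s}{j-1}=j\binom{s}{j}$; the unmatched boundary term drops out because of the factor $s-j$. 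The same computation, with $\llbr K_a,\sigma_b\rrbr=aK_{a+b}$ in place of the relation for $\llbr\sigma_a,\sigma_b\rrbr$, shows $\gamma^{(m)}_s$ is a symmetry. As a sanity check, $\eta^{(m)}_0=\sigma_m$ and $\gamma^{(m)}_0=K_0$ are the evident symmetries of $U_{n,t_m}=\sigma_m$.

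For the algebra I would compute the three brackets term by term. Since $\llbr K_a,K_b\rrbr=0$, the relation $\llbr\gamma^{(m)}_l,\gamma^{(m)}_s\rrbr=0$ is immediate. For $\llbr\gamma^{(m)}_l,\eta^{(m)}_s\rrbr$ I would expand into a double sum over $i,j$, use $\llbr K_{-im},\sigma_{m-jm}\rrbr=-im\,K_{m-(i+j)m}$, and re-collect terms according to $p:=i+j-1$; the coefficient of $(mt_m)^{l+s-1-p}K_{-pm}$ is then $-m\sum_{i+j=p+1}i\binom{l}{i}\binom{s}{j}$, which equals $-ml\binom{l+s-1}{p}$ by $i\binom{l}{i}=l\binom{l-1}{i-1}$ followed by Vandermonde's convolution, so the result is $-ml\,\gamma^{(m)}_{l+s-1}$. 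For $\llbr\eta^{(m)}_l,\eta^{(m)}_s\rrbr$ the structure constant $m(j-i)$ replaces $-im$, and the same regrouping together with the two convolutions $\sum_{i+j=p+1}i\binom{l}{i}\binom{s}{j}=l\binom{l+s-1}{p}$ and $\sum_{i+j=p+1}j\binom{l}{i}\binom{s}{j}=s\binom{l+s-1}{p}$ yields the coefficient $m(s-l)\binom{l+s-1}{p}$, that is, $-m(l-s)\,\eta^{(m)}_{l+s-1}$. In both cases the value $p=-1$ forces $i=j=0$ and is killed by the explicit factor $i$ (resp. $j-i$), while $p=l+s-1$ corresponds to $i=l$, $j=s$, so the summation range matches exactly that in the definitions of $\gamma^{(m)}_{l+s-1}$ and $\eta^{(m)}_{l+s-1}$.

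I do not anticipate a conceptual obstacle; the whole argument is combinatorial bookkeeping. The part needing the most care will be the index shifts in the double sums combined with the three binomial identities (Pascal absorption and the two Vandermonde convolutions), together with the verification that the leftover boundary terms vanish so that the answers are honestly $\gamma^{(m)}_{l+s-1}$ and $\eta^{(m)}_{l+s-1}$ rather than something with a shifted index range. The one point easy to overlook is that $\sigma_m$ carries no explicit $t_m$-dependence, which is what makes the symmetry test reduce cleanly to $\partial_{t_m}$ acting on the scalar prefactors alone.
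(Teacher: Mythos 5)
Your proposal is correct and is essentially the intended argument: the paper itself omits the proof, deferring to Proposition 5.1 of \cite{ZNBC-PLA-2006} precisely because the result depends only on the Virasoro structure \eqref{alg-k-s} (identical to the AL case), and your direct verification via \eqref{def:sym}, the absorption identity, and the Vandermonde convolutions is exactly that computation. The key observations you flag --- that $\sigma_m$ carries no explicit $t_m$-dependence so $\partial_{t_m}$ hits only the scalar prefactors, and that the boundary terms $p=-1$ are killed by the factors $i$ and $j-i$ --- are the right places to be careful, and they check out.
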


We skip the proof, for which  one can refer to  Proposition 5.1 in \cite{ZNBC-PLA-2006}.

For an isospectral equation $U_{n,t_m}=K_m$ in the isospectral hierarchy \eqref{hie-iso},
it also has two sets of symmetries,
\begin{eqnarray*}
\{K_s\} \text{ and~~} \{\tau^{(m)}_s=mt_mK_{m+s}+\sigma_s\},~~~s\in \mathbb{Z}.
\end{eqnarray*}
and they  form a centerless Virosoro algebra as well, with structure
\begin{eqnarray*}
&& \llbr K_l,K_s\rrbr=0,\\
&& \llbr K_l,\tau^{(m)}_s\rrbr =l K_{l+s},\\
&& \llbr \tau^{(m)}_l,\tau^{(m)}_s \rrbr =(l-s)\tau^{(m)}_{l+s}.
\end{eqnarray*}

\subsection{Infinite dimensional subalgebras and new sdAKNS hierarchies}\label{Sec-5-3}

Equations \eqref{hie-iso} and \eqref{hie-non} are not the sdAKNS hierarchies.
It is interesting to consider infinite dimensional subalgebras of the algebra \eqref{alg-k-s}.
These subalgebras, together with continuum limits of the recursion operator \eqref{rec-L},
can be used to construct and identify  sdAKNS hierarchies.

\subsubsection{Infinite dimensional subalgebras}\label{Sec-5-3-1}

Define
\begin{align}
& \b K^{(+)}_s=({\mathcal{L}}^{(+)})^s K_0,~~ \b \sigma^{(+)}_s=({\mathcal{L}}^{(+)})^s \sigma_0,~~ s=0,1,2,\cdots, \label{flow-K+}
\\
& \b K^{(-)}_s=({\mathcal{L}}^{(-)})^s K_0,~~ \b \sigma^{(+)}_s=({\mathcal{L}}^{(-)})^s \sigma_0,~~ s=0,1,2,\cdots, \label{flow-K-}
\end{align}
and
\begin{subequations}
\begin{align}
& \b K_{2s+1}={\mathcal{L}}^s (K_{1}-K_{-1})/2,~~ \b K_{2s}=\mathcal{L}^{s} K_0,~~ s=0,1,2,\cdots,\\
& \b \sigma_{2s+1}= {\mathcal{L}}^s (\sigma_{1}-\sigma_{-1})/2,~~ \b \sigma_{2s}=\mathcal{L}^{s} \sigma_0,~~ s=0,1,2,\cdots,
\end{align}
\label{flow-KK}
\end{subequations}
where
\begin{equation}\label{LL}
{\mathcal{L}}=L-2I+L^{-1},~~ {\mathcal{L}}^{(+)}=L-I,~~{\mathcal{L}}^{(-)}=I-L^{-1}
\end{equation}
and $I$ is the $2 \times 2$ unit matrix.

\begin{lem}\label{prop-4-1}
The flows
\begin{equation}
\mathrm{(I):}~ \{ \b K^{(+)}_s, \b \sigma^{(+)}_l\},~~~
\mathrm{(II):}~ \{ \b K^{(-)}_s, \b \sigma^{(-)}_l\},~~~
\mathrm{(III):}~ \{ \b K_{2m+j}, \b \sigma_{2s+k}\},
\end{equation}
generate three infinite dimensional subalgebras, respectively, with structures
\begin{align*}
\mathrm{(I):}~& \llbr  \b K^{(+)}_s, \b K^{(+)}_l \rrbr =0, \\
& \llbr  \b K^{(+)}_s, \b \sigma^{(+)}_l \rrbr =s(\b K^{(+)}_{s+l} +\b K^{(+)}_{s+l-1}), \\
& \llbr  \b \sigma^{(+)}_s, \b \sigma^{(+)}_l \rrbr =(s-l)( \b \sigma^{(+)}_{s+l} +  \b \sigma^{(+)}_{s+l-1}),
\end{align*}
\begin{align*}
\mathrm{(II):}~& \llbr  \b K^{(-)}_s, \b K^{(-)}_l \rrbr =0, \\
& \llbr  \b K^{(-)}_s, \b \sigma^{(-)}_l \rrbr =-s(\b K^{(-)}_{s+l} - \b K^{(-)}_{s+l-1}), \\
& \llbr  \b \sigma^{(-)}_s, \b \sigma^{(-)}_l \rrbr =-(s-l)(\b \sigma^{(-)}_{s+l}- \b \sigma^{(-)}_{s+l-1}),
\end{align*}
\begin{align*}
\mathrm{(III):}~& \llbr \b K_{2m+j}, \b K_{2s+k}\rrbr =0,\\
& \llbr \b K_{2m}, \b \sigma_{2s}\rrbr = 2m \b K_{2(m+s)-1},\\
& \llbr \b K_{2m}, \b \sigma_{2s+1}\rrbr = 2m \b K_{2(m+s)}+ \frac{m}{2} \b K_{2(m+s+1)},\\
& \llbr \b K_{2m+1}, \b \sigma_{2s+k}\rrbr = (2m+1) \b K_{2(m+s)+k}+ \frac{m+1}{2} \b K_{2(m+s+1)+k},\\
& \llbr \b \sigma_{2m}, \b \sigma_{2s}\rrbr = 2(m-s) \b \sigma_{2(m+s)-1},\\
& \llbr \b \sigma_{2m+j}, \b \sigma_{2s+1}\rrbr =[2(m-s)-1+j] \b \sigma_{2(m+s)+j}+ \frac{m-s-1+j}{2} \b \sigma_{2(m+s+1)+j},
\end{align*}
where $j,k\in\{0,1\}$, $m,s\geq 0$ and we define $\b K^{(\pm)}_{-1} =\b \sigma^{(\pm)}_{-1}  =\b K_{-1}=\b \sigma_{-1}=0$.
Obviously, the set (III) has a subalgebra $\{ \b K_{2m+1}, \b \sigma_{2s+1}\}$.
\end{lem}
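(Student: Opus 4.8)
The plan is to reduce everything to the single, well-understood algebra \eqref{alg-k-s} for the flows $K_s,\sigma_s$ ($s\in\mathbb{Z}$) and the fact that $L$ is hereditary, and then to carry out a purely algebraic reshuffling. First I would record the elementary operator identities that follow from $\mathcal{L}=L-2I+L^{-1}$, $\mathcal{L}^{(+)}=L-I$, $\mathcal{L}^{(-)}=I-L^{-1}$, namely $\mathcal{L}^{(+)}-\mathcal{L}^{(-)}=\mathcal{L}$, $\mathcal{L}^{(+)}\mathcal{L}^{(-)}=\mathcal{L}^{(-)}\mathcal{L}^{(+)}$, $(\mathcal{L}^{(+)})^s=\sum_{j}\binom{s}{j}(-1)^{s-j}L^{j}$, and similarly for $\mathcal{L}^{(-)}$ and for $\mathcal{L}^{s}=(L^{1/2}-L^{-1/2})^{2s}$ (the square roots being only a bookkeeping device: $\mathcal{L}^s$ expands into a finite $\mathbb{Z}$-linear combination of the $L^k$). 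Under these expansions, by definition \eqref{flow-K+}--\eqref{flow-KK} each $\bar K^{(\pm)}_s$, $\bar\sigma^{(\pm)}_s$, $\bar K_r$, $\bar\sigma_r$ becomes an explicit finite linear combination of the $K_k$ and $\sigma_k$ from \eqref{hie-iso}--\eqref{hie-non}.

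Next I would use the key structural fact that $L$ is a strong symmetry of every flow in the hierarchy, so that $L$ (hence any polynomial in $L$ and $L^{-1}$, hence $\mathcal{L}^{(\pm)}$ and $\mathcal{L}$) behaves like a scalar factor that can be pulled through the commutator $\llbr\cdot,\cdot\rrbr$ in a controlled way. Concretely, the hereditariness of $L$ together with $\llbr K_m,K_s\rrbr=0$, $\llbr K_m,\sigma_s\rrbr=mK_{m+s}$, $\llbr\sigma_m,\sigma_s\rrbr=(m-s)\sigma_{m+s}$ gives the master relations $\llbr P(L)K_0,Q(L)K_0\rrbr=0$, $\llbr P(L)K_0,Q(L)\sigma_0\rrbr$ and $\llbr P(L)\sigma_0,Q(L)\sigma_0\rrbr$ expressible through a ``shift-by-one in the index'' operation coming from the $m$ and $(m-s)$ coefficients; this is exactly the computation already performed for the $K,\sigma$ hierarchy and for the AL hierarchy in \cite{ZTOF-JMP-1991,ZNBC-PLA-2006}, so I would invoke it rather than redo it. Then the three bracket tables in the Lemma are obtained by substituting the finite expansions of $\bar K,\bar\sigma$ and collecting terms: for (I) one gets $s(\mathcal{L}^{(+)}+I)(\mathcal{L}^{(+)})^{s+l-1}K_0 = s(\bar K^{(+)}_{s+l}+\bar K^{(+)}_{s+l-1})$ because the index-shift contributes the operator factor $\mathcal{L}^{(+)}=L-I$ and the ``$+I$'' is the leftover from $L=\mathcal{L}^{(+)}+I$; for (II) the sign flips and $L^{-1}=I-\mathcal{L}^{(-)}$ produces the minus sign and the difference $\bar K^{(-)}_{s+l}-\bar K^{(-)}_{s+l-1}$; for (III) the split into even/odd indices with $\bar K_{2s+1}=\mathcal{L}^s(K_1-K_{-1})/2$ means one must track how multiplication by $L$ and $L^{-1}$ moves between the even and odd subfamilies, which produces the $\frac{m}{2}$, $\frac{m+1}{2}$ and $\frac{m-s-1+j}{2}$ correction terms and the shift $2(m+s)\pm$ constants.

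The step I expect to be the main obstacle is the bookkeeping in case (III): because $\bar K_{2s}=\mathcal{L}^sK_0$ and $\bar K_{2s+1}=\mathcal{L}^s(K_1-K_{-1})/2$ live on different ``parities,'' applying the identity $\llbr K_m,\sigma_s\rrbr=mK_{m+s}$ after expanding $\mathcal{L}^s$ into powers of $L$ forces one to re-express things like $L^{a}K_{\pm1}$ and $L^{a}\sigma_0$ back in terms of the $\bar K$'s, and the algebra $L^{a}(K_1-K_{-1})/2$ versus $(\bar K_{2a+1})$ together with the stray $L$-factor from the index shift is what generates both the $2(m+s)+k$ index and the extra $\frac{\,\cdot\,}{2}\bar K_{2(m+s+1)+k}$ term; keeping the combinatorial coefficients consistent (and checking the degenerate cases against the convention $\bar K^{(\pm)}_{-1}=\bar\sigma^{(\pm)}_{-1}=\bar K_{-1}=\bar\sigma_{-1}=0$) is the delicate part. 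Once case (III) is settled, the closure of the subalgebras is immediate from the tables, and the final remark that $\{\bar K_{2m+1},\bar\sigma_{2s+1}\}$ is a subalgebra follows by reading off the $j=k=1$ rows, where every right-hand side again has only odd indices.
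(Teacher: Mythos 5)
Your proposal is correct, and it is worth noting that the paper itself gives no argument here: it simply states that the lemma ``can be verified directly'' and cites \cite{ZhaC-SAM-2010-II} for the structure of set (III). What you outline is precisely that direct verification, carried out via the generating-polynomial expansion of $\mathcal{L}^{(\pm)}$ and $\mathcal{L}$ in powers of $L$. The computation does close up as you claim: for instance $\llbr \b K^{(+)}_s,\b\sigma^{(+)}_l\rrbr=\sum_{j,i}\binom{s}{j}\binom{l}{i}(-1)^{s-j+l-i}\,j\,K_{j+i}=sL(L-I)^{s+l-1}K_0=s(\b K^{(+)}_{s+l}+\b K^{(+)}_{s+l-1})$ since $L=\mathcal{L}^{(+)}+I$; and for (III) the identities $(L-L^{-1})^2=\mathcal{L}(\mathcal{L}+4I)$ and $L+L^{-1}=\mathcal{L}+2I$ are exactly what produce the $\tfrac{m}{2}$, $\tfrac{m+1}{2}$ and $\tfrac{m-s-1+j}{2}$ coefficients you anticipate. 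One small correction of emphasis: you attribute the ability to pull polynomials in $L,L^{-1}$ through the bracket to the hereditariness/strong-symmetry of $L$, but once the centerless Virasoro algebra \eqref{alg-k-s} is granted for all integer indices (which is the input from \cite{ZTOF-JMP-1991}, and is where hereditariness is actually used), all you need is the bilinearity of $\llbr\cdot,\cdot\rrbr$ over constant coefficients applied to the finite expansions $\b K^{(\pm)}_s,\b\sigma^{(\pm)}_s,\b K_r,\b\sigma_r\in\mathrm{span}\{K_j,\sigma_j\}_{j\in\mathbb{Z}}$. With that adjustment your argument is complete, and the closure statements and the odd-index subalgebra follow from the tables exactly as you say.
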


This lemma can be verified directly. The structure of set (III) has been proved in \cite{ZhaC-SAM-2010-II}.
Note that operator $L-L^{-1}$ does not generate a subalgebra of \eqref{alg-k-s}.

\subsubsection{Three sdAKNS hierarchies}\label{Sec-5-3-2}

Let us construct the sdAKNS hierarchies through considering possible combinations of the flows $\{K_s\}$ and $\{\sigma_s\}$
defined in \eqref{hie-iso} and \eqref{hie-non}.
The criteria is that these combined flows should be closed as a subalgebra of \eqref{alg-k-s}
and they must yield their counterparts in  the continuous AKNS flows in reasonable continuum limits.
For this purpose we investigate continuum limits of initial flows, $L$ and $L^{-1}$ under a uniform scheme\footnote{The correspondence
between $x$ and $n$ is $x=x_0+n\epsilon$ where $\epsilon$ is viewed as a spacing parameter.
Here we take $x_0=0$ for convenience.}
\begin{equation}
 U_{n+j}=\epsilon(q(x+j\epsilon,t), r(x+j\epsilon,t))^T,~~~
 n\epsilon = x, (n\to \infty, \epsilon\to 0).
\label{cl-sche}
\end{equation}
We find
\begin{subequations}
\begin{align}
& K_0=\epsilon (q,-r)^T,~~ (K_{1}-K_{-1})/2=\epsilon^2 (q,r)_x^T + O(\epsilon^3),\\
& \sigma_0= x (q,-r)^T + O(\epsilon),~~ (\sigma_{1}-\sigma_{-1})/2=\epsilon (xq_x+q,xr_x+r)^T + O(\epsilon^2),
\end{align}
\label{cl-K0s0}
\end{subequations}
and
\begin{subequations}
\begin{align}
& L=I + \epsilon L_{AKNS} +\frac{\epsilon^2}{2}L_{NLS}+ O(\epsilon^3),\\
& L^{-1} =I - \epsilon L_{AKNS} +\frac{\epsilon^2}{2}(2L^2_{AKNS}-L_{NLS})+ O(\epsilon^3),
\end{align}
\label{cl-LL-1}
\end{subequations}
where
\begin{equation}
L_{AKNS}=\left(\begin{array}{cc}
            \partial_x- 2 q\partial_x^{-1}r & -2q \partial^{-1}_x q\\
            2r \partial^{-1}_x r & -\partial_x+2 r\partial_x^{-1}q
            \end{array}\right)
\label{rec-L-akns}
\end{equation}
is the recursion operator of the continuous AKNS hierarchy,
and
\begin{equation*}
L_{NLS}=\left(\begin{array}{cc}
            \partial_x^2  & 0\\
            0 & \partial_x^2
            \end{array}\right)-2qr,
\end{equation*}
which yields the nonlinear Schr\"odinger system by acting on $(q,-r)^T$.
\eqref{cl-LL-1} indicates
\begin{subequations}
\begin{align}
& {\mathcal{L}}^{(+)}=L-I= \epsilon L_{AKNS} +O(\epsilon^2),\\
& {\mathcal{L}}^{(-)}=I-L^{-1}= \epsilon L_{AKNS} +O(\epsilon^2),\\
& {\mathcal{L}}=L-2I+L^{-1}= \epsilon^2 L^2_{AKNS} +O(\epsilon^3).
\end{align}
\end{subequations}
Thus, based on the continuum limits of initial flows in \eqref{cl-K0s0},
the definition of the flows (\ref{flow-K+},\ref{flow-K-},\ref{flow-KK}) and Lemma \ref{prop-4-1},
we obtain three sets of sdAKNS hierarchies.

\begin{thm}\label{prop-4-2}
The flows defined in (\ref{flow-K+},\ref{flow-K-},\ref{flow-KK}) yield three sets of sdAKNS hierarchies
\begin{align}
\mathrm{(I):}~& U_{t_s}= \b K^{(+)}_s,~~~ U_{t_s}=  \b \sigma^{(+)}_s,\label{hie-iso-sdakns+}\\
\mathrm{(II):}~& U_{t_s}= \b K^{(-)}_s,~~~ U_{t_s}= \b \sigma^{(-)}_s,\label{hie-iso-sdakns-}\\
\mathrm{(III):}~& U_{t_s}= \b K_{s}, ~~~~~ U_{t_s}=  \b \sigma_{s}, \label{hie-iso-sdakns}
\end{align}
where $s=0,1,\cdots$. They all correspond to the continuous isospectral and nonisospectral AKNS hierarchies
under the continuum limit \eqref{cl-sche}\footnote{In principle we
need to suitably rescale $t_s$ by $\epsilon^j t_s$. For example, for $U_{t_s}= \b K_{s}$, rescale $t_s$ by $\epsilon^{-s} t_s$.}
\end{thm}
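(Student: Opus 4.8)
The statement bundles two assertions: that each of the three families in \eqref{hie-iso-sdakns+}, \eqref{hie-iso-sdakns-} and \eqref{hie-iso-sdakns} really is a \emph{hierarchy} — a set of flows built from a single recursion operator and closed under the commutator \eqref{def:comm} — and that each of them reduces, under the scaling \eqref{cl-sche}, to the continuous isospectral and nonisospectral AKNS hierarchies. The first assertion is already in hand: Lemma \ref{prop-4-1} exhibits $\{\bar K^{(+)}_s,\bar\sigma^{(+)}_l\}$, $\{\bar K^{(-)}_s,\bar\sigma^{(-)}_l\}$ and $\{\bar K_{2m+j},\bar\sigma_{2s+k}\}$ as infinite dimensional subalgebras of the Virasoro algebra \eqref{alg-k-s}, so the combined flows are automatically closed and each family is generated from its seed flows by iterating a fixed operator. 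Thus only the continuum-limit statement needs proof.

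The plan for the continuum limit is to substitute \eqref{cl-sche} into the definitions \eqref{flow-K+}, \eqref{flow-K-}, \eqref{flow-KK} and invoke the expansions \eqref{cl-K0s0} and \eqref{cl-LL-1} already established. First I would record the two elementary identities $L_{AKNS}(q,-r)^T=(q_x,r_x)^T$ and $L_{AKNS}(xq,-xr)^T=(xq_x+q,\,xr_x+r)^T$, obtained from \eqref{rec-L-akns} by a direct computation in which the $\partial_x^{-1}$ terms cancel in pairs; together with \eqref{cl-K0s0} they identify the ``odd'' seeds $(K_{1}-K_{-1})/2$ and $(\sigma_{1}-\sigma_{-1})/2$ with $\epsilon^2 L_{AKNS}$ and $\epsilon L_{AKNS}$ applied to the ``even'' seeds $K_0$ and $\sigma_0$, to leading order. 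Since \eqref{cl-LL-1} gives $\mathcal{L}^{(\pm)}=\epsilon L_{AKNS}+O(\epsilon^2)$ and $\mathcal{L}=\epsilon^2 L_{AKNS}^2+O(\epsilon^3)$, iterating yields $(\mathcal{L}^{(\pm)})^s=\epsilon^s L_{AKNS}^s+O(\epsilon^{s+1})$ and $\mathcal{L}^s=\epsilon^{2s}L_{AKNS}^{2s}+O(\epsilon^{2s+1})$; applying these to the seeds and using the identities above I obtain, for families (I) and (II),
\[
\bar K^{(\pm)}_s=\epsilon^{s+1}L_{AKNS}^s(q,-r)^T+O(\epsilon^{s+2}),\qquad
\bar\sigma^{(\pm)}_s=\epsilon^{s}L_{AKNS}^s(xq,-xr)^T+O(\epsilon^{s+1}),
\]
and for family (III) the uniform expressions $\bar K_m=\epsilon^{m+1}L_{AKNS}^m(q,-r)^T+\cdots$ and $\bar\sigma_m=\epsilon^{m}L_{AKNS}^m(xq,-xr)^T+\cdots$, the even and odd cases merging precisely because $L_{AKNS}(q,-r)^T=(q,r)_x^T$. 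Dividing the flow equation by the overall factor $\epsilon$ carried by $U_n=\epsilon(q,r)^T$ and rescaling $t_s$ by the appropriate power of $\epsilon$, the leading term becomes exactly $(q,r)^T_{t_s}=L_{AKNS}^s(q,-r)^T$ in the isospectral case and $(q,r)^T_{t_s}=L_{AKNS}^s(xq,-xr)^T$ in the nonisospectral case, i.e. the continuous AKNS hierarchies; collecting the three families proves the theorem.

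The main obstacle I anticipate is not the bookkeeping of $\epsilon$-powers but the justification that the operator expansion \eqref{cl-LL-1} may be composed and truncated the way the argument demands. The operators $\mathcal{L}^{(\pm)}$ and $\mathcal{L}$ carry the nonlocal tails $(E+1)\Delta^{-1}$, and under \eqref{cl-sche} one has $\Delta\sim\epsilon\,\partial_x$ but $\Delta^{-1}\sim\epsilon^{-1}\partial_x^{-1}$, so the remainder in $(\mathcal{L}^{(\pm)})^s=\epsilon^sL_{AKNS}^s+O(\epsilon^{s+1})$ is not manifestly of higher order and must be controlled by hand. I would carry out the estimates on the manifold $\mathcal{M}$ of rapidly decaying $U_n$, where $\Delta^{-1}$ acting on a summable sequence is well defined, and prove the uniform-in-$s$ bound by induction on $s$, each inductive step being of the same nature as the single-step expansion \eqref{cl-LL-1} (and in particular relying on the cancellation already visible there that produces $\mathcal{L}=\epsilon^2 L_{AKNS}^2+O(\epsilon^3)$). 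This is the same continuum-limit analysis that underlies the AL sdAKNS hierarchies recorded in the Appendix (cf.\ \cite{ZhaC-SAM-2010-II}), so it transcribes with only routine modifications; once it is in place, Theorem \ref{prop-4-2} follows by assembling the three families.
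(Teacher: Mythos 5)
Your proposal is correct and follows essentially the same route as the paper: the paper likewise takes closure from Lemma \ref{prop-4-1} and then combines the seed-flow limits \eqref{cl-K0s0} with the operator expansions \eqref{cl-LL-1} (iterated and applied to the seeds, with a rescaling of $t_s$) to identify the continuum limits with the continuous AKNS hierarchies generated by $L_{AKNS}$. The extra care you devote to composing and truncating the nonlocal operator expansions goes beyond what the paper records, but it does not change the argument.
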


Here we remark that $\{U_{t_s}= \b K^{(+)}_s\}$ was already found \cite{MRT-IP-1994},
which is just \eqref{hie-sc}, the result of symmetry constraint of the D$^2\Delta$KP hierarchy. Besides, equation  $U_{t_2}= \b K_{2}$
was also mentioned in \cite{MRT-IP-1994} as a discretization of the 2nd order AKNS equations.

\section{Conclusions}\label{Sec-6}

The spectral problem \eqref{sp-DT} can generate a sdAKNS hierarchy. It is also a Darboux transformation of the ZS-AKNS spectral problem \eqref{sp-AKNS}.
By revisiting it, we have shown that it is gauge equivalent to \eqref{sp-1} which provides a bidirectional discretization of \eqref{sp-AKNS},
while the AL spectral problem \eqref{sp-AL}  comes from a monodirectional discretization of \eqref{sp-AKNS}.
As a relation with higher dimensional systems, we proved that  \eqref{sp-DT} and a related sdAKNS hierarchy can be obtained from the Lax triads
of the D$^2\Delta$KP hierarchy via the symmetry constraint \eqref{u0-SC}.
This fact, on one side, coincides with the continuous case \cite{KS-IP-1991,CL-JPA-1992}.
On the other side, it exhibits a new aspect of discrete systems:
there are two discrete spectral problems, \eqref{sp-AL} and \eqref{sp-DT}
which can generate sdAKNS hierarchy, but only \eqref{sp-DT} that is a bidirectional discretisation of the ZS-AKNS spectral problem
is related to the symmetry constraint of the D$^2\Delta$KP hierarchy.
In addition to the above results, three sdAKNS hierarchies \eqref{hie-iso-sdakns+}, \eqref{hie-iso-sdakns-} and \eqref{hie-iso-sdakns}
are obtained with a criteria  that the corresponding flows are closed w.r.t. Lie product \eqref{def:comm}
and in continuum limit they approach to the continuous AKNS hierarchy.
Among these sdAKNS hierarchies, $\{U_{n,t_j}=\b K^{(+)}_j\}$ is the one derived from Lax triad of the D$^2\Delta$KP hierarchy via the symmetry constraint.

With regard to the symmetry-constrainted spectral problem \eqref{sp-SC}, in  \cite{LLTHC-JMP-2013}
a spectral problem
\begin{equation}
\h{\mathfrak{L}}\phi_n=\xi \phi_n,~~
\h{\mathfrak{L}}=\Delta + Q_n\Delta^{-1}R_n
\label{sp-SC-LLTHC}
\end{equation}
was given as a constrain of $\h{\mathfrak{L}}\phi_n=\xi \phi_n,~
\h{\mathfrak{L}}=\Delta + u_{1,n}\Delta^{-1}+ u_{2,n}\Delta^{-1}+u_{2,n}\Delta^{-1}+\cdots$,
and the related equations were investigated (e.g. \cite{LCH-MPLB-2013}).
However, our results are different from them.

There are several interesting problems that could be followed. Apart from
continuous correspondence of integrability properties of the new sdAKNS hierarchies£¬
there would be many interactions between the (1+1) and (2+1)-dimensional systems based on symmetry constraints,
such as solutions and integrability characteristics (e.g.\cite{CL-PLA-1991,KS-IP-1991,Oev-PA-1993,SS-IP-1991} in continuous case).
Some known results related to the pseudo-difference operator \eqref{L} (e.g. \cite{HI-IMRN-2000,LC-JPA-2010}) could also be used to
investigate the sdAKNS hierarchies.

\vskip 15pt
\subsection*{Acknowledgments}
This project is  supported by the NSF of China (Nos.11371241 and 11631007).

\vskip 15pt

\begin{appendix}

\section{The sdAKNS hierarchies from the AL spectral problem }\label{Sec-app}

From the AL spectral problem \eqref{sp-AL} one can derive the AL hierarchy (cf.\cite{FQSZ-arxiv-2013}):
\begin{align}\label{AL-hie}
U_{t_s}=K_s=\b L^s  K_0,~~ s\in \mathbb{Z},
\end{align}
where $K_0$ and the first few flows are
\begin{align*}
& K_0=
\left(
  \begin{array}{c}
    Q_n \\
    -R_n \\
  \end{array}
\right),~~
 K_1=\b \mu_n
\left(
  \begin{array}{c}
    Q_{n+1} \\
    -R_{n-1} \\
  \end{array}
\right),~~
 K_{-1}=\b \mu_n
\left(
  \begin{array}{c}
    Q_{n-1} \\
    -R_{n+1} \\
  \end{array}
\right),
\end{align*}
the recursion operator reads
\begin{align}\label{AL-L}
\b L=&
\left(
  \begin{array}{cc}
    E & 0 \\
    0 & E^{-1} \\
  \end{array}
\right)+
\left(
  \begin{array}{c}
    -Q_nE \\
    R_n \\
  \end{array}
\right)
\Delta^{-1}(R_nE,Q_nE^{-1})\nonumber\\
&+\b\mu_n
\left(
  \begin{array}{c}
    -EQ_n \\
    R_{n-1} \\
  \end{array}
\right)
\Delta^{-1}(R_n,Q_n)\frac{1}{\b\mu_n},
\end{align}
with its inverse
\begin{align*}
\b L^{-1}=&
\left(
  \begin{array}{cc}
    E^{-1} & 0 \\
    0 & E \\
  \end{array}
\right)+
\left(
  \begin{array}{c}
    Q_n \\
    -R_nE \\
  \end{array}
\right)
\Delta^{-1}(R_nE^{-1},Q_nE)\nonumber\\
&+\b\mu_n
\left(
  \begin{array}{c}
    Q_{n-1} \\
    -ER_n \\
  \end{array}
\right)
\Delta^{-1}(R_n,Q_n)\frac{1}{\b\mu_n},
\end{align*}
and here $\b\mu_n=1-Q_nR_n$.
Under the continuum limit scheme \eqref{cl-sche}, one can find
\begin{equation*}
 \b L=I + \epsilon L_{AKNS} +\frac{\epsilon^2}{2}L^2_{AKNS}+ O(\epsilon^3),
\end{equation*}
where $L_{AKNS}$ is given in \eqref{rec-L-akns}.

There are also three sdAKNS hierarchies related to the AL spectral problem:
\begin{equation}
\mathrm{I}: \{U_{n,t_s}=\b K^{(+)}_s\},~~ \mathrm{II}: \{ U_{n,t_s}=\b K^{(-)}_s\},~~ \mathrm{III}: \{U_{n,t_s}=\b K_s\},
\label{hie-iso-akns-al}
\end{equation}
where
\begin{align}
& \b K^{(+)}_s=(\b{\mathcal{L}}^{(+)})^s K_0,\\
& \b K^{(-)}_s=(\b{\mathcal{L}}^{(-)})^s K_0,\\
& \b K_{2s+1}=\b{\mathcal{L}}^s (K_{1}-K_{-1})/2,~~ \b K_{2s}=\mathcal{L}^{s} K_0,~~ s=0,1,2,\cdots,
\end{align}
\begin{equation}\label{b-LL}
\b{\mathcal{L}}=\b L-2I+\b L^{-1},~~ \b{\mathcal{L}}^{(+)}=\b L-I,~~\b{\mathcal{L}}^{(-)}=I-\b L^{-1}.
\end{equation}
The third sdAKNS hierarchy has been well studied and it admits one-field reduction to get sdKdV, sdmKdV and sdNLS hierarchies.
As a review one can refer to \cite{FQSZ-arxiv-2013}.

\end{appendix}



\begin{thebibliography}{00}


\bibitem{ZakS-JETP-1972} V.E. Zakharnov, A.B. Shabat,
         Exact theory of two-dimensional self-focusing and one-dimensional self-modulating waves in nonlinear media,
         Sov. Phys. JETP, 34 (1972) 62-9.

\bibitem{AKNS-PRL-1973}M.J. Ablowitz, D.J. Kaup, A.C. Newell, H. Segur,
         Nonlinear-evolution equations of physical significance,
         Phys. Rev. Lett., 31 (1973) 125-7.

\bibitem{AL-JMP-1975} M.J. Ablowitz, J.F. Ladik,
         Nonlinear differential-difference equations,
         J. Math. Phys., 16 (1975) 598-603.

\bibitem{AL-JMP-1976} M.J. Ablowitz, J.F. Ladik,
         Nonlinear differential-difference equations and Fourier analysis,
         J. Math. Phys., 17 (1976) 1011-8.

\bibitem{ZhaC-SAM-2010-I} D.J. Zhang, S.T. Chen,
         Symmetries for the Ablowitz-Ladik hierarchy: Part I. Four-potential case,
         Stud. Appl. Math., 125 (2010) 393-418.

\bibitem{ZhaC-SAM-2010-II} D.J. Zhang, S.T. Chen,
         Symmetries for the Ablowitz-Ladik hierarchy: Part II. Integrable discrete nonlinear Schr\"odinger equations and discrete AKNS hierarchy,
         Stud. Appl. Math., 125 (2010) 419-43.


\bibitem{FQSZ-arxiv-2013} W. Fu, Z.J. Qiao, J.W. Sun, D.J. Zhang,
         The semi-discrete AKNS system: Conservation laws, reductions and continuum limits,
         arXiv: 1307.3671.

\bibitem{RT-prin-1989} O. Ragnisco, G. Z. Tu,
         A new hierarchy of integrable discrete systems,
         preprint, Dept. of Phys., University of Rome, 1989.

\bibitem{ZTOF-JMP-1991} H.W. Zhang, G.Z. Tu, W. Oevel, B. Fuchssteiner,
         Symmetries, conserved quantities, and hierarchies for some lattice systems with soliton structure,
         J. Math. Phys., 32 (1991) 1908-18.

\bibitem{MRT-IP-1994} I. Merola, O. Ragnisco, G.Z. Tu,
         A novel hierarchy of integrable lattices,
         Inverse Problems, 10 (1994) 1315-34.


\bibitem{KSS-PLA-1991}B. Konopelchenko, J. Sidorenko, W. Strampp,
        (1+1)-dimensional integrable systems as symmetry constraints of (2+1)-dimensional systems,
        Phys. Lett. A, 157 (1991) 17-21.

\bibitem{KS-IP-1991} B. Konopelchenko, W. Strampp,
         The AKNS hierarchy as symmetry constraint of the KP hierarchy,
         Inverse Problems, 7 (1991) L17-24.

\bibitem{CL-PLA-1991} Y. Cheng, Y.S. Li,
         The constraint of the Kadomtsev-Petviashvili equation and its special solutions,
         Phys. Lett. A, 157 (1991) 22-6.

\bibitem{CL-JPA-1992} Y. Cheng, Y.S. Li,
         Constraints of the 2+1 dimensional integrable soliton systems,
         J. Phys. A: Math. Gen. 25 (1992) 419-31.

\bibitem{DJM-JPSJ-1982} E. Date, M. Jimbo, T. Miwa,
         Method of generating discrete soliton equations II,
         J. Phys. Soc. Jpn., 51 (1982) 4125-31.

\bibitem{KT-SCF-1997} S. Kanaga Vel, K.M. Tamizhmani,
         Lax pairs, symmetries and conservation laws of a differential-difference equation --- Sato's approach,
         Chaos, Solitons and Fractals, 8 (1997) 917-31.

\bibitem{LB-PNAS-1980} D. Levi, R. Benguria,
         B\"acklund transformations and nonlinear differential difference equations,
         Proc. Natl. Acad. Sci. U.S.A., 77 (1980) 5025-7.

\bibitem{Levi-JPA-1981} D. Levi,
         Nonlinear differential difference equations as B\"acklund transformations,
         J. Phys. A: Math. Gen., 14 (1981) 1083-98.

\bibitem{CZ-CPL-2012} C.W. Cao, G.Y. Zhang,
         Lax pairs for discrete integrable equations via Darboux transformations,
         Chin. Phys. Lett., 29 (2012) No.50202 (2pp).

\bibitem{CaoZ-JPA-2012} C.W. Cao, G.Y. Zhang,
         Integrable symplectic maps associated with the ZS-AKNS spectral problem,
         J. Phys. A: Math. Theor., 45 (2012) No.265201 (15pp).

\bibitem{KMW-TMP-2013} F. Khanizadeh, A.V. Mikhailov, J.P. Wang,
         Darboux transformations and recursion operators for differential-difference equations,
         Theore. Math. Phys., 177 (2013) 387-440.

\bibitem{Mik-INI-2013} A.V. Mikhailov,
         Formal diagonalisation of the Lax-Darboux scheme and conservation laws of integrable partial differential,
         differential-difference and partial difference equations,
         http://www.newton.ac.uk/files/seminar/20130711140014301-153657.pdf .

\bibitem{AY-JPA-1994} V.E. Adler, R.I. Yamilov,
         Explicit auto-transformations of integrable chains,
         J. Phys. A: Math. Gen., 27 (1994) 477-92.

\bibitem{FF-PD-1981} B. Fuchssteiner, A.S. Fokas,
         Symplectic structures, their B\"acklund transformations and hereditary symmetries,
         Physica D, 4 (1981) 47-66.

\bibitem{Fuc-MSE-1991} B. Fuchssteiner,
         Hamiltonian structure and integrability,
         Math. Sci. Eng., 185 (1991) 211-56.



\bibitem{FanY-IJTP-2009} E.G. Fan, Z.H. Yang,
         A lattice hierarchy with a free function and its reductions to the Ablowitz-Ladik and Volterra hierarchies,
         Int. J. Theor. Phys., 48 (2009) 1-9.

\bibitem{TZZZ-MPLB-2016} S.F. Tian, F.B. Zhou, S.W Zhou, T.T. Zhang,
         Analytic solutions and Darboux transformation to a new Hamiltonian lattice hierarchy,
         Mod. Phys. Lett. B, 30 (2016) No.1650100 (12pp).

\bibitem{DonZZ-CNSNS-2016} H.H Dong, Y.Z. Zhang, X.E. Zhang,
        The new integrable symplectic map and the symmetry of integrable nonlinear lattice equation,
        Commun. Nonl. Sci. Numer. Simulat., 36 (2016) 354-65.

\bibitem{YXD-CSF-2005} H.X. Yang, X.X. Xu, H.Y. Ding,
         New hierarchies of integrable positive and negative lattice models and Darboux transformation,
         Chaos, Solitons \& Fractals, 26 (2005) 1091-103.

\bibitem{DXYD-MPLB-2007} H.Y. Ding, X.X. Xu, H.X. Yang, X. Tian,
         A new hierarchy of discrete integrable model and its Darboux transformation,
         Mod. Phys. Lett. B, 21 (2007) 189-97.

\bibitem{Qin-JMP-2008} Z.Y. Qin,
         A generalized Ablowitz-Ladik hierarchy, multi-Hamiltonian structure and Darboux transformation,
         J. Math. Phys., 49 (2008) No.063505 (14pp).

\bibitem{WRH-DDNS-2012} X.B. Wu, W.G. Rui, X.C. Hong,
         A new discrete integrable system derived from a generalized Ablowitz-Ladik hierarchy and its Darboux transformation,
         Discrete Dyn. Nat. Soc., 2012 (2012) No.652076 (19pp).

\bibitem{LLC-CTP-2014} H.M. Li, Y.Q. Li, Y. Chen,
         An integrable discrete generalized nonlinear Schr\"odinger equation and its reductions,
         Commun. Theor. Phys., 62 (2014) 641-8.


\bibitem{FHTZ-Non-2013} W. Fu, L. Huang, K.M. Tamizhmani, D.J. Zhang,
         Integrable properties of the differential-difference Kadomtsev-Petviashvili hierarchy and continuum limits,
         Nonlinearity, 26 (2013) 3197-229.

\bibitem{CaoZX-prn-2016} C.W. Cao, D.J. Zhang, X.X. Xu,
         On the lattice potential KP equation, preprint, 2016.

\bibitem{CheLH-JNMP-2013} J.P. Cheng, M.H. Li, J.S. He,
         The Virasoro action on the tau function for the constrained discrete KP hierarchy,
         J. Nonl. Math. Phys., 20 (2013) 529-38.


\bibitem{Chen-book-2006} D.Y. Chen,
         Introduction to Soliton Theory (in Chinese),
         Sci. Pub. House, Beijing, 2006.

\bibitem{YLZ-JPA-2009} Y.Q. Yao, X.J. Liu, Y.B. Zeng,
        A new extended discrete KP hierarchy and a generalized dressing method,
        J. Phys. A: Math. Theor., 42 (2009) No.454026 (10pp).

\bibitem{Oev-PA-1993} W. Oevel,
        Darboux theorems and  Wronskian formulas for integrable systems I: Constrainted KP flows,
        Physica A, 195 (1993) 533-76.

\bibitem{LCTLH-MM-2016} C.Z. Li, J.P. Cheng, K.L. Tian, M.H. Li, J.S. He,
        Ghost symmetry of the discrete KP hierarchy,
        Monatsh. Math., 180 (2016) 815-32. (arXiv: 1201.4419).



\bibitem{MSS-JMP-1990} J. Matsukidaira, J. Satsuma, W. Strampp,
         Conserved quantities and symmetries of KP hierarchy,
         J. Math. Phys., 31 (1990) 1426-34.




\bibitem{ZNBC-PLA-2006} D.J. Zhang, T.K. Ning, J.B. Bi, D.Y. Chen,
         New symmetries for the Ablowitz-Ladik hierarchies,
         Phys. Lett. A, 359 (2006), 458-66.


\bibitem{LLTHC-JMP-2013} M.H. Li, C.Z. Li, K.L. Tian, J.S. He, Y. Cheng,
         Virasoro type algebraic structure hidden in the constrained discrete Kadomtsev-Petviashvili hierarchy,
         J. Math. Phys., 54 (2013) No.043512 (11pp).

\bibitem{LCH-MPLB-2013} M.H. Li, J.P. Cheng, J.S. He,
         The gauge transformation of the constrained semi-discrete KP hierarchy,
         Mod. Phys. Lett. B, 27 (2013), No.1350043 (13pp).


\bibitem{SS-IP-1991}J. Sidorenko, W. Strampp,
        Symmetry constraints of the KP hierarchy,
        Inverse Problems, 7 (1991) L37-43.

\bibitem{HI-IMRN-2000}L. Haine, P. Iliev,
        Commutative rings of difference operators and an adelic flag manifold,
        Int. Math. Res. Notices, 2000 (2000) 281-323.

\bibitem{LC-JPA-2010} S.W. Liu, Y. Cheng,
       Sato's B\"acklund transformations, additional symmetries and ASvM formula for the discrete KP hierarchy,
       J. Phys. A: Math. Theor., 43 (2010) No.135202 (11pp).





\end{thebibliography}
\end{document}